 \newcommand{\NE}{Nash equilibrium }
\newcommand{\be}{\begin{equation}}
\newcommand{\bel}[1]{\begin{equation}\label{#1}}
\newcommand{\qe}{\end{equation}}
\newcommand{\ee}{\end{equation}}
\newcommand{\eeq}{\end{equation}}
\newcommand{\ba}{\begin{eqnarray}}
\newcommand{\ea}{\end{eqnarray}}
\newcommand{\rf}[1]{(\ref{#1})}
\newcommand{\bi}{\bibitem}
\theoremstyle{theorem}
\newtheorem{theo}{Theorem}
\theoremstyle{theorem}
\theoremstyle{remark}
\theoremstyle{remark}
\theoremstyle{corollary}
\theoremstyle{lemma}
\theoremstyle{definition}
\newtheorem{defi}{Definition}
\begin{document}

\title{Periodic Strategies II: Generalizations and Extensions}
\author{V. K. Oikonomou$^{1}$\thanks{
voiko@physics.auth.gr, v.k.oikonomou1979@gmail.com}, J. Jost$^{1,2}$\thanks{
jost@mis.mpg.de}\\
$^{1}$Max Planck Institute for Mathematics in the Sciences\\
Inselstrasse 22, 04103 Leipzig, Germany\\ $^{2}$Santa Fe
Institute, New Mexico, USA }\maketitle

\begin{abstract}
At a mixed Nash equilibrium, the payoff of a player does not depend on her own action, as long as her opponent sticks to his. In a periodic strategy, a
 concept developed in a
previous paper \cite{jostoikonomou},  in contrast, the own payoff does not depend on the opponent's action. Here, we generalize this to multi-player simultaneous perfect information
strategic form games. 
We show that also in this class of games, there always exists at least one periodic strategy, and we
investigate the mathematical properties of such periodic
strategies. In addition, we demonstrate that
periodic strategies may exist in games with incomplete
information; we shall focus on Bayesian games. Moreover we
discuss the differences between the periodic strategies formalism
and cooperative game theory. In fact, the periodic
strategies are obtained in a purely non-cooperative way, and
periodic strategies are as cooperative as the Nash equilibria are.
Finally, we incorporate the periodic strategies in an epistemic
game theory framework, and discuss several features of this
approach.
\end{abstract}

\section{Introduction}

John Nash \cite{nash} showed that every strategic form game
possesses at least one Nash equilibrium (for an alternative proof,
that avoids the use of Brouwer's fixed point theorem and only
needs simple topological facts about bifurcations, see
\cite{jbow}). Here, it is assumed that players act rationally in
the sense that they try to maximize their payoffs, and this
rationality of all players is common knowledge, as are the
possible actions and payoffs of each player. The Nash equilibrium
then is consistent in the sense that when everybody plays it, no
single player could gain an advantage from a unilateral deviation.
Such a Nash equilibrium can be pure, that is, each player plays
some definite strategy, or mixed, where some players choose among
their actions with certain probabilities. For instance, the
matching pennies game has only one Nash equilibrium, and this is
mixed, as each player plays either strategy randomly with
probability $1/2$. Such a mixed \NE has a curious property. To see this, for simplicity, we consider a game
with two players $i=A,B$ who have two possible actions $1,2$ each.
When $A$ and $B$ play actions $\alpha$ and $\beta$ with respective
probabilities $p_\alpha$ and $q_\beta$ (with $p_1+p_2=1=q_1+q_2$),
then the (expected) utility of $A$ is (in obvious notation)
\begin{equation}
  \label{int1}
\mathcal{U}_A=\sum_{\alpha, \beta} \mathcal{U}_A(\alpha,\beta)p_\alpha q_\beta.
\end{equation}
When now $A$ wants to maximize her payoff, she adjusts her probabilities $p_1$ and applies calculus to get as a first order necessary condition at a mixed value $0<p_1<1$
\begin{equation}
  \label{int2}
 0= \sum_{\beta} \mathcal{U}_A(1,\beta) q_\beta -\sum_{\beta} \mathcal{U}_A(2,\beta) q_\beta .
\end{equation}
This then is a condition about the probabilities $q_\beta$ of her
opponent which is independent of her own probabilities $p_\alpha$.
That is, when the opponent plays according to those values, it is
irrelevant for $A$ what she plays. She will always get the same
payoff. Thus, at a mixed Nash equilibrium, when every player has a
mixed strategy, no single player can change her outcome by
changing her strategy, as long as all others stick to their
probabilities.

Of course, this is well known. The phenomenon is simply a
consequence of the fact that the utility $\mathcal{U}$ depends
{\it linearly} on the probabilities of the individual players.
Therefore, taking the derivative w.r.t. them makes the resulting
condition  independent of them.

In \cite{jostoikonomou}, we have investigated what happens when
$A$ seeks a critical point of \eqref{int1} not with respect to her
own probability $p_!$, but with respect to the opponent
probability $q_!$. We then get the condition
\begin{equation}
  \label{int3}
 0= \sum_{\alpha} \mathcal{U}_A(\alpha,1) p_\alpha -\sum_{\alpha} \mathcal{U}_A(\alpha,2) p_\alpha .
\end{equation}
This is now independent of the opponent's probabilities $q_\beta$. That is, when $A$ plays according to \eqref{int3}, her payoff is unaffected by the choice of strategy of her opponent.

Let us consider a simple example  where the  payoff table is given by,
\begin{table}[h]
\centering
  \begin{tabular}{| l |l |l | }
    \hline
      $ $ &  $1$ & $2$ \\ \hline
  $1$ & 2,1 & 0,0   \\ \hline
    $2$ & 0,0 & 1,1\\
    \hline
 \end{tabular}
\end{table}\\
with $A$ being the row player and $B$ the column player. \eqref{int2} for $A$ yields $q_1=1/3, q_2=2/3$, and the analogous computation for $B$ gives $p_1=1/2=p_2$. The expected payoffs for at this mixed Nash equilibrium are
\bel{int4}
\overline{\mathcal{U}}_A= 2 \frac{1}{2} \frac{1}{3} + 1 \frac{1}{2}\frac{2}{3}=\frac{2}{3},\quad \overline{\mathcal{U}}_B= 1 \frac{1}{2} \frac{1}{3} + 1 \frac{1}{2}\frac{2}{3}=\frac{1}{2}.
\qe
In contrast, \rf{int3}, when applied for both players, yields $p_1=\frac{1}{3}, q_1=\frac{1}{2}$. The expected utilities remain the same. As investigated in \cite{jostoikonomou}, the latter property does not always hold, that is, the payoffs at a mixed Nash and at an equilibrium computed according to \rf{int3} need not always be the same, and depending on the game, either of them could be larger than the other. But an equilibrium according to \rf{int3}, called {\it periodic} for reasons to be discussed in a moment, exists in the same generality as a mixed \NE and to show this is the main purpose  of this paper.

In order to explore the consistency of such an equilibrium, it is
useful to recall the concept of rationalizability of Bernheim
\cite{bernheim} and Pearce \cite{pearce}. Here, a sequence of
alternating strategy choices of $A$ and $B$ is called
rationalizable if each of them is a best response to the previous
strategy of the opponent. A \NE is rationalizable, but in general,
there are other sequences of rationalizability strategies. For
instance, in the matching pennies game, there is a sequence where
each player alternates between her/his two options. In such a
sequence, each strategy periodically repeats itself. A similar
phenomenon exists also for our type of equilibrium, and this is
the reason why it was called {\it periodic}. The interpretation is
somewhat curious, however. Since a player can of course not
directly choose the opponent's probabilities to maximize her
payoff, which was the assumption underlying \rf{int3}, the logic
has to become somewhat different. While $A$ cannot choose $q_1$,
her opponent can choose his $q_1$ so as to maximize $A$'s payoff,
and conversely, $A$ could then choose her $p_1$ to maximize $B$'s
payoff. Again, this can be done as an iterative response as in the
rationalizability paradigm, and when both players act that way,
this is perfectly self-consistent. That is, when everybody
believes  that everybody else operates in that way and acts
accordingly, a periodic cycle exists that confirms everybody's
belief.

 In this paper,  we first generalize
the theorems related to periodic strategies to simultaneous
multi-player perfect information strategic form games. Several examples will illustrate the new features brought by the
presence of three or more players.

Periodic strategies for non-trivial perfect information
simultaneous strategic form games are related (or in some cases
are identical) to all the existing rationalizable strategies
\cite{rat1,rat2,rat3,rat4,rat5,rat6,rat7,rat8,rat9,rat10,rat11,rat12,rat13,rat14,rat15,rat16,rat17},
as demonstrated in \cite{jostoikonomou}. We shall then turn to the
question whether such  periodic strategies also  exist in
strategic form games with incomplete information. We shall mainly
focus on Bayesian games
\cite{rat3,harsanyi1,harsanyi2,harsanyi3,tan88,new2,new3,new4,new5,new6,new7,new8},
in which case the presence of rationalizable strategies (ex ante
and interim) suggests that periodic strategies should also exist
in this type of games. In fact, games where the players are
uncertain about the setting and only know that certain scenarios
occur with certain probabilities can sometimes be modelled as a
games with an additional nature player who chooses among the
scenarios with those probabilities. This will also be useful for
our reasoning. Also with regard to incomplete information and
cooperative game theory, in Ref. \cite{ref1} an interesting
approach was used in order to study reentrant phase transitions
and defensive alliances in social dilemmas with informed
strategies. Furthermore a review on co-evolutionary games was
provided in Ref. \cite{ref2}.

One of the most important features of the periodic strategy
algorithm is that periodic strategies do apply to non-cooperative
game theory \cite{tirole,ozzy}. As explained, however, by construction, the
periodic strategies are based on maximizing the payoffs of a game
for a player by using the probability distribution of the
opponents. We should point out, however, that this is different from the setting of cooperative game theory, which is about coalitions and distributions of payoffs inside them. In contrast,  the
procedure for obtaining the periodic strategies involves
maximization of a player's own utility function, without any
cooperation with the opponent, or any apparent agreement. Finally,
we shall attempt to incorporate  the periodicity concept
into an epistemic game theory \cite{tan88,pereabook} theoretical
framework.

The outline of the paper is as follows: In section 2, we
generalize the periodic solution concept to
multi-player finite, perfect information simultaneous strategic
form games. In section 3 we study the periodic solution concept
for games with incomplete information, quantified in terms of
Bayesian games. The non-cooperativity argument on which
periodicity is based is discussed in section 4, while in section
5, we incorporate the periodicity concept in a very simple
epistemic game theory framework by connecting types to the
periodicity number, without getting into much details however. The
conclusions along with future perspectives of the periodicity
concept follow at the end of the paper.

\section{Generalization of the Periodicity Concept to Multi-player Games -- The Perfect Information Case}

In this section we generalize the concept of periodicity to
multi-player strategic form games with perfect information. We
start  with a concrete example. Consider a three
player game with 
\begin{itemize}
\item The set of players: $I={A,B,C}$

\item Their strategy spaces $\mathcal{M}(A)$, $\mathcal{M}(B)$,  $\mathcal{M}(C)$ and the total strategy
space $\bar{G}=\mathcal{M}(A)\times \mathcal{M}(B)\times  \mathcal{M}(C)$

\item The payoff functions
$\mathcal{U}_i(\bar{G}):\bar{G}\rightarrow \Re$, $i={A,B,C}$

\end{itemize}
We define six continuous maps  between the strategy spaces $\mathcal{M}(i)$ and
$\mathcal{M}(j)$, 
\begin{equation}\label{phi1}
\varphi_{ij}:\mathcal{M}(i)\rightarrow \mathcal{M}(j)
\end{equation}
We usually write $\varphi_{ij}\varphi_{km}$ for 
$\varphi_{ij}\circ\varphi_{km}$.
The maps $\varphi_{ij}$ and $\varphi_{ji}$, act in such a way that when  we start with an action $x_k$ of player $i$, the following inequality holds:
\begin{align}\label{inequality}
&\mathcal{U}_i(x_k,\varphi_{ij}(x_k),\varphi_{im}(x_k))> \mathcal{U}_i(x_k,x,y) {\,}{\,}{\,}{\,}{\,}{\,}\forall {\,}(x,y){\,}\in {\,}\mathcal{M}(j)\cup \mathcal{M}(m)\backslash\{\varphi_{ij}(x_k),\varphi_{im}(x_k)\}
\end{align}
In the  example of GAME 1  in Fig.
\ref{3PLAYERGAMEa1}, each player has two actions available.
\begin{figure}[h!]
\centering
\includegraphics[width=25pc]{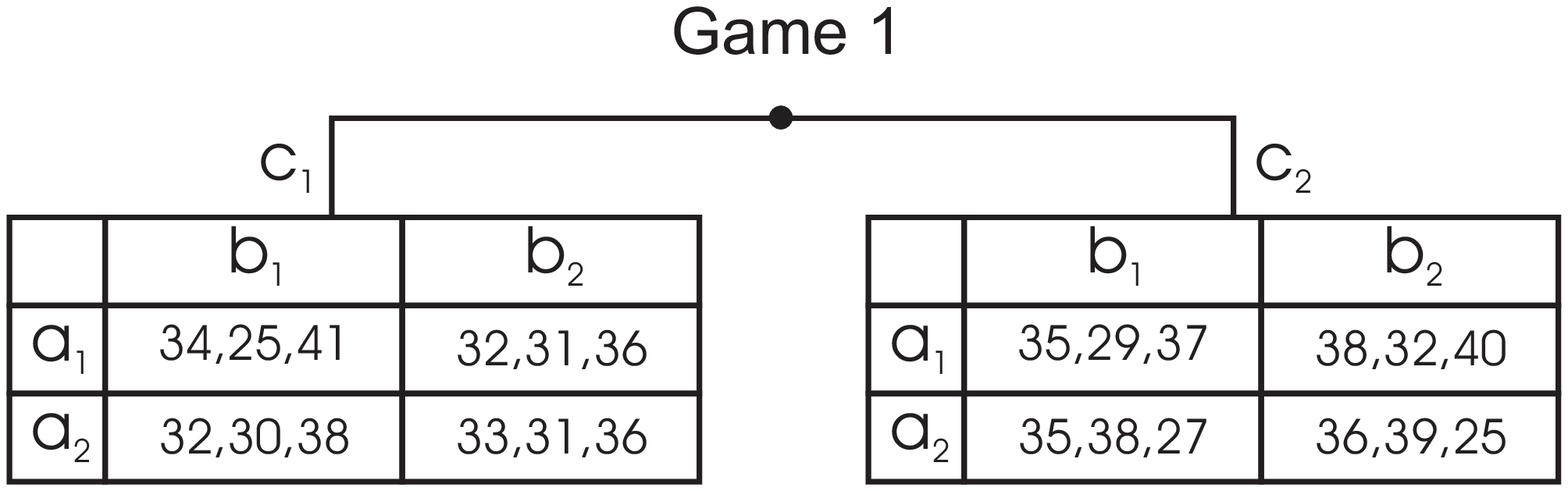}
\caption{A 3-Player Game payoff matrix. The game is a simultaneous
action game of three players A, B and C. The actions of the
players A, B and C are denoted as $(a_1,a_2)$, $(b_1,b_2)$ and
$(c_1,c_2)$ respectively. }\label{3PLAYERGAMEa1}
\end{figure}

In Fig. \ref{3PLAYERGAMEanew} we can see the periodicity chains
for the action $a_1$ of player A,  recalling  the periodicity concept we gave in the 2-player game
case in Ref. \cite{jostoikonomou}. Let us give a verbal description
of the periodicity diagram. The letters
$ABC$ on the arrows indicate the  player whose action is considered.

Player A will play $a_1$ if player B plays $b_2$ and player C
plays $c_2$ simultaneously. In the map notation, this becomes 
$\varphi_{AB}(a_1),\varphi_{AC}(a_1)$, as indicated in the figure.
By following the B arrow, B will
play $b_2$ if player A plays $a_2$ and player C plays $c_2$.
Following C in node ''1'', C would play $c_2$ if player B plays
$b_2$ and A plays $a_1$ (we have reached a periodic cycle at this point
but we continue in order to show the new structures). Back in node 2,
following the C arrow, C will play $c_2$ if B plays $b_2$ and A
plays $a_1$. Back in node 2 following the arrow A, A will play
$a_2$ if B plays $b_2$ and C plays $c_2$. Accordingly, in node 3,
following arrow B, B will play $b_2$ if A plays $a_2$ and C plays
$c_2$ (we have reached a set stable cycle of $a_2$ as we will see) and
so on.
\begin{figure}[h!]
\centering
\includegraphics[width=25pc]{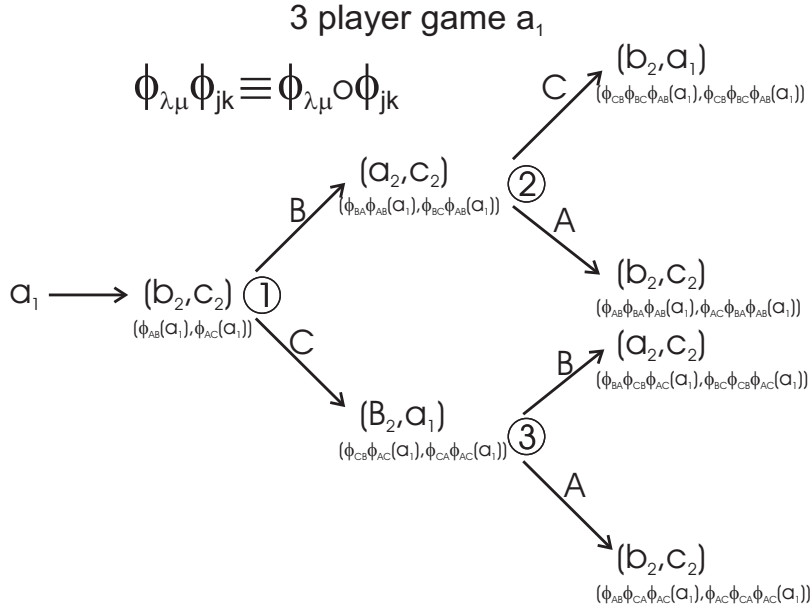}
\caption{The periodicity of the strategy $a_1$ for the 3-player
game of Fig. \ref{3PLAYERGAMEa1}. In the figure it is shown how
the periodicity concept is realized for the strategy $a_1$ of
player A in detail. The graph returns to the original strategy
$a_1$.}\label{3PLAYERGAMEanew}
\end{figure}
Back at node 3, following A, A will play $a_1$, if B plays $b_2$
and $C$ plays $c_2$ and so on. Thus, we have the periodic cycles
\begin{align}\label{periodicitytypes}
& \varphi_{CA}\varphi_{AC}(a_1)=a_1 \\ \notag
& \varphi_{CA}\varphi_{BC}\varphi_{AB}(a_1)=a_1 \\ \notag
\end{align}
The most striking new feature of the multi-player game case
is the fact that in the periodicity algorithm, the
utility functions appear in a rather different order as we shall
see. Let us take the first  type, 
$\varphi_{CA}\varphi_{AC}(a_1)=a_1$. The periodic algorithm in
terms of the utility functions is
\begin{align}\label{qper1}
&\mathcal{U}_A (a_1,\varphi_{AB}(a_1),\varphi_{AC}(a_1))> \mathcal{U}_A(a_1,x,y) {\,}{\,}{\,}{\,}{\,}{\,}
 \forall {\,}(x,y){\,}\in {\,}\mathcal{M}(B)\cup \mathcal{M}(C)\backslash\{\varphi_{AB}(a_1),\varphi_{AC}(a_1)\}
\\  & \notag
\mathcal{U}_C(\varphi_{AC}(a_1),\varphi_{CA}\varphi_{AC}(a_1),\varphi_{CB}\varphi_{AC}(a_1))> \mathcal{U}_C(\varphi_{AC}(a_1),x,y) {\,}{\,}{\,}{\,}{\,}{\,} \forall {\,}(x,y){\,}\\ \notag & \in {\,}\mathcal{M}(A)\cup \mathcal{M}(B)\backslash\{\varphi_{CA}\varphi_{AC}(a_1),\varphi_{CB}\varphi_{AC}(a_1)\}
\\ \notag & \mathcal{U}_A(\varphi_{CA}\varphi_{AC}(a_1),\varphi_{AB}\varphi_{CA}\varphi_{AC}(a_1),\varphi_{AC}\varphi_{CA}\varphi_{AC}(a_1))> \mathcal{U}_A(\varphi_{CA}\varphi_{AC}(a_1),x,y) {\,}{\,}{\,}{\,}{\,}{\,}
\\ \notag & \forall {\,}(x,y){\,}\in {\,}\mathcal{M}(C)\cup \mathcal{M}(B)\backslash\{\varphi_{AB}\varphi_{CA}\varphi_{AC}(a_1),\varphi_{AC}\varphi_{CA}\varphi_{AC}(a_1)\}
\end{align}
For the other type, 
$\varphi_{CA}\varphi_{BC}\varphi_{AB}(a_1)=a_1$, the periodic algorithm becomes
\begin{align}\label{qper1}
& \mathcal{U}_A(a_1,\varphi_{AB}(a_1),\varphi_{AC}(a_1))> \mathcal{U}_A(a_1,x,y) {\,}{\,}{\,}{\,}{\,}{\,}
 \forall {\,}(x,y){\,}\in {\,}\mathcal{M}(B)\cup \mathcal{M}(C)\backslash\{\varphi_{AB}(a_1),\varphi_{AC}(a_1)\}
\\ \notag &
\mathcal{U}_B(\varphi_{AB}(a_1),\varphi_{BA}\varphi_{AB}(a_1),\varphi_{BC}\varphi_{AB}(a_1))> \mathcal{U}_C(\varphi_{AB}(a_1),x,y) {\,}{\,}{\,}{\,}{\,}{\,} \forall {\,}(x,y){\,}\\ \notag & \in {\,}\mathcal{M}(A)\cup \mathcal{M}(B)\backslash\{\varphi_{BA}\varphi_{AB}(a_1),\varphi_{BC}\varphi_{AB}(a_1)\}
 \\ \notag & \mathcal{U}_C(\varphi_{BC}\varphi_{AB}(a_1),\varphi_{CA}\varphi_{BC}\varphi_{AB}(a_1),\varphi_{CB}\varphi_{BC}\varphi_{AB}(a_1))> \mathcal{U}_C(\varphi_{BC}\varphi_{AB}(a_1),x,y) {\,}{\,}{\,}{\,}{\,}{\,}\\ \notag & \forall {\,}(x,y){\,}\in {\,}\mathcal{M}(C)\cup \mathcal{M}(B)\backslash\{\varphi_{CA}\varphi_{BC}\varphi_{AB}(a_1),\varphi_{CB}\varphi_{BC}\varphi_{AB}(a_1)\}
\end{align}
In terms of utility functions, this looks like
\begin{align}\label{g2perc}
& U_A{\,}{\,}{\,}\xrightarrow{P}{\,}{\,}{\,} U_C{\,}{\,}{\,}\xrightarrow{P}{\,}{\,}{\,} U_A \\ \notag &
U_A{\,}{\,}{\,}\xrightarrow{P}{\,}{\,}{\,} U_C{\,}{\,}{\,}\xrightarrow{P}{\,}{\,}{\,} U_A \xrightarrow{P}{\,}{\,}{\,} U_C{\,}{\,}{\,}\xrightarrow{P}{\,}{\,}{\,} U_A  \\ \notag &
U_A{\,}{\,}{\,}\xrightarrow{P}{\,}{\,}{\,} U_B{\,}{\,}{\,}\xrightarrow{P}{\,}{\,}{\,} U_B{\,}{\,}{\,}\xrightarrow{P}{\,}{\,}{\,} U_B \\ \notag &
U_A{\,}{\,}{\,}\xrightarrow{P}{\,}{\,}{\,} U_B{\,}{\,}{\,}\xrightarrow{P}{\,}{\,}{\,} U_A{\,}{\,}{\,}\xrightarrow{P}{\,}{\,}{\,} U_C \\ \notag &
U_A{\,}{\,}{\,}\xrightarrow{P}{\,}{\,}{\,} U_C{\,}{\,}{\,}\xrightarrow{P}{\,}{\,}{\,} U_B {\,}{\,}{\,}\xrightarrow{P}{\,}{\,}{\,} U_C\\ \notag &
\end{align}
If we include all the periodic
points we found in the graph, we have the following new types of
periodicity (some of which belong to set stable cycles):
\begin{align}\label{periodicitytypes}
& \varphi_{CA}\varphi_{AC}(a_1)=a_1 \\ \notag
& \varphi_{BC}\varphi_{AB}(a_1)=a_1 \\ \notag
& \varphi_{BC}\varphi_{CB}\varphi_{AC}(a_1)=a_1 \\ \notag
& \varphi_{AC}\varphi_{BA}\varphi_{AB}(a_1)=a_1 \\ \notag
& \varphi_{BC}\varphi_{AB}\varphi_{CA}\varphi_{AC}(a_1)=a_1 \\ \notag
& \varphi_{BC}\varphi_{CB}\varphi_{AC}(a_1)=a_1 \\ \notag
\end{align}
The periodicity corresponding to the $a_2$ action is shown in Fig. \ref{3Playergamea2}.
\begin{figure}[h!]
\centering
\includegraphics[width=25pc]{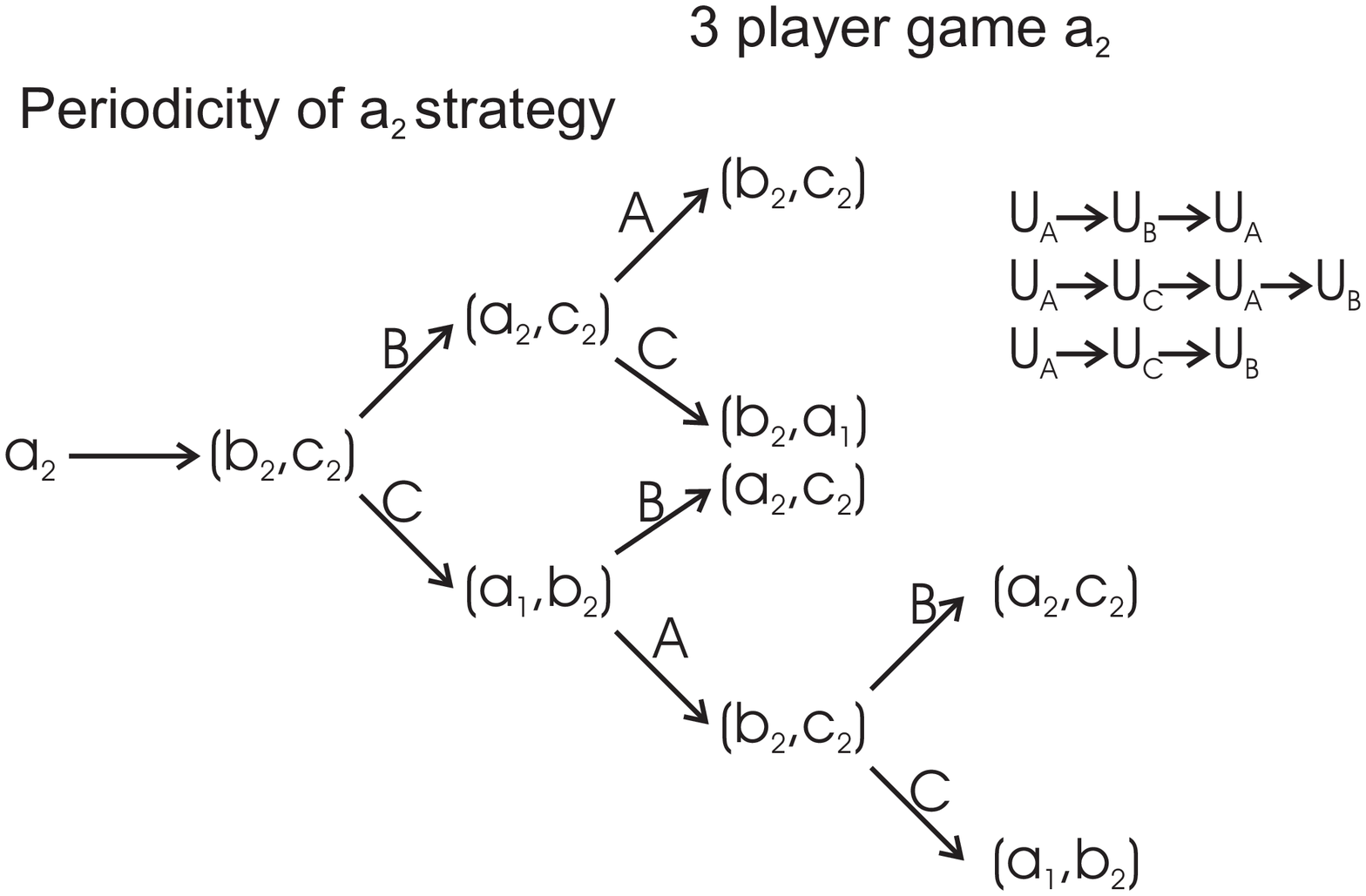}
\caption{Periodicity of strategy $a_2$ for the 3-player game of
Fig. \ref{3PLAYERGAMEa1}.}\label{3Playergamea2}
\end{figure}
In general, there can be 
various types of periodicity, with their number, type and form not
directly depending on
the numbers of players and  actions. As the number
of players increases, depending on the payoffs, the complexity of
the periodic strategies significantly increases. But as will
become obvious, the complexity of the algorithm depends strongly
on the payoffs. Now we generalize this type of games and we
proceed to a 4-player game with each player having again two available
actions, as shown in Fig. \ref{4playergame}. In Fig.
\ref{4playera1} and
\ref{4playera2}, we see the periodic structure for the actions $a_1$ and $a_2$, resp.
\begin{figure}[h!]
\centering
\includegraphics[width=35pc]{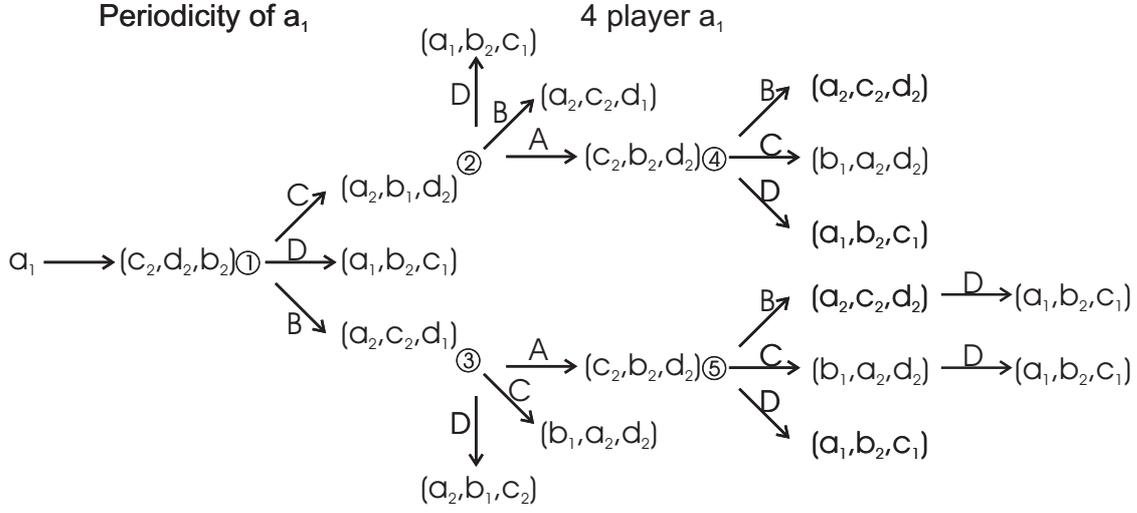}
\caption{Periodicity of strategy $a_1$ for the 4-player game of
Fig. \ref{4playergame}.}\label{4playera1}
\end{figure}
 We now look at the periodic strategies for $a_1$ in Fig. \ref{4playera1}. These are the following:
\begin{align}\label{periodicitytypes4P}
& \varphi_{DA}\varphi_{AD}(a_1)=a_1 \\ \notag
& \varphi_{DA}\varphi_{CD}\varphi_{AC}(a_1)(a_1)=a_1 \\ \notag
& \varphi_{DA}\varphi_{AD}\varphi_{CA}(a_1)\varphi_{AC}(a_1)=a_1 \\ \notag
& \varphi_{DA}\varphi_{BD}\varphi_{AB}(a_1)\varphi_{BA}(a_1)\varphi_{AB}(a_1)=a_1 \\ \notag
& \varphi_{DA}\varphi_{AD}\varphi_{BA}\varphi_{AB}(a_1)=a_1 \\ \notag
\end{align}
In terms of utility functions, this looks as follows, with Type $j$ referring to line $j$ in \eqref{periodicitytypes4P}.
\begin{figure}[h!]
\centering
\includegraphics[width=25pc]{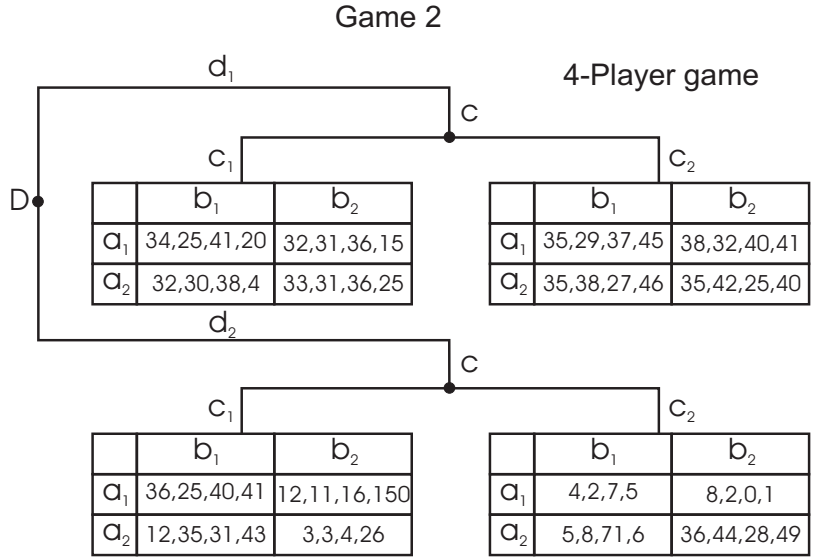}
\caption{A 4-player game payoff matrix. The game is a simultaneous
action game of four players A, B, C and D.}\label{4playergame}
\end{figure}

\textbf{Type 1}

\begin{align}\label{qper1}
& \mathcal{U}_A(a_1,\varphi_{AD}(a_1),\varphi_{AC}(a_1),\varphi_{AB}(a_1))> \mathcal{U}_A(a_1,x,y,z) {\,}{\,}{\,}{\,}{\,}{\,}
 \\ \notag & \forall {\,}(x,y,z){\,}\in {\,}\mathcal{M}(B)\cup \mathcal{M}(C)\cup \mathcal{M}(D)\backslash\{,\varphi_{AD}(a_1),\varphi_{AC}(a_1),\varphi_{AB}(a_1)\}
\\  &
\mathcal{U}_D(\varphi_{AD}(a_1),\varphi_{DA}\varphi_{AD}(a_1),\varphi_{DC}\varphi_{AD}(a_1),\varphi_{DB}\varphi_{AD}(a_1))> \mathcal{U}_D(\varphi_{AD}(a_1),x,y,z) {\,}{\,}{\,}{\,}{\,}{\,} \\ \notag &\forall {\,}(x,y,z){\,} \in {\,}\mathcal{M}(A)\cup \mathcal{M}(B)\cup \mathcal{M}(C)\backslash\{\varphi_{BA}\varphi_{AB}(a_1),\varphi_{BC}\varphi_{AB}(a_1)\}
\end{align}

\textbf{Type 2}

\begin{align}\label{qper1}
& \mathcal{U}_A(a_1,\varphi_{AC}(a_1),\varphi_{AD}(a_1),\varphi_{AB}(a_1))> \mathcal{U}_A(a_1,x,y,z) {\,}{\,}{\,}{\,}{\,}{\,}
 \\ \notag & \forall {\,}(x,y,z){\,}\in {\,}\mathcal{M}(B)\cup \mathcal{M}(C)\cup \mathcal{M}(D)\backslash\{,\varphi_{AD}(a_1),\varphi_{AC}(a_1),\varphi_{AB}(a_1)\}
\\ \notag &
\mathcal{U}_C(\varphi_{AC}(a_1),\varphi_{CA}\varphi_{AC}(a_1),\varphi_{CB}\varphi_{AC}(a_1),\varphi_{CD}\varphi_{AC}(a_1))> \mathcal{U}_C(\varphi_{AC}(a_1),x,y,z) {\,}{\,}{\,}{\,}{\,}{\,} \forall {\,}(x,y,z){\,}\\ \notag & \in {\,}\mathcal{M}(A)\cup \mathcal{M}(B)\cup \mathcal{M}(D)\backslash\{,\varphi_{CA}\varphi_{AC}(a_1),\varphi_{CB}\varphi_{AC}(a_1),\varphi_{CD}\varphi_{AC}(a_1)\}
\\ \notag & \mathcal{U}_D(\varphi_{CD}\varphi_{AC}(a_1),\varphi_{DA}\varphi_{CD}\varphi_{AC}(a_1),\varphi_{DB}\varphi_{CD}\varphi_{AC}(a_1),\varphi_{DC}\varphi_{CD}\varphi_{AC}(a_1))>
\\ \notag &\mathcal{U}_C(\varphi_{CD}\varphi_{AC}(a_1),x,y,z) {\,}{\,}{\,}{\,}{\,}{\,}
\\ \notag &\forall {\,}(x,y,z){\,}\in {\,}\mathcal{M}(C)\cup \mathcal{M}(B)\cup \mathcal{M}(A)\backslash\{\varphi_{DA}\varphi_{CD}\varphi_{AC}(a_1),\varphi_{DB}\varphi_{CD}\varphi_{AC}(a_1),\varphi_{DC}\varphi_{CD}\varphi_{AC}(a_1)\}
\end{align}

\textbf{Type 3}

\begin{align}\label{qper1}
& \mathcal{U}_A(a_1,\varphi_{AC}(a_1),\varphi_{AD}(a_1),\varphi_{AB}(a_1))> \mathcal{U}_A(a_1,x,y,z) {\,}{\,}{\,}{\,}{\,}{\,}
 \\ \notag & \forall  (x,y,z) \\ \notag &{\,}\in {\,}\mathcal{M}(B)\cup \mathcal{M}(C)\cup \mathcal{M}(D)\backslash\{,\varphi_{AD}(a_1),\varphi_{AC}(a_1),\varphi_{AB}(a_1)\}
\\ \notag &
\mathcal{U}_C(\varphi_{AC}(a_1),\varphi_{CA}\varphi_{AC}(a_1),\varphi_{CB}\varphi_{AC}(a_1),\varphi_{CD}\varphi_{AC}(a_1))>
\mathcal{U}_C(\varphi_{AC}(a_1),x,y,z) {\,}{\,}{\,}{\,}{\,}{\,}
 \forall  (x,y,z) \\ \notag &{\,}\\ \notag & \in
{\,}\mathcal{M}(A)\cup \mathcal{M}(B)\cup
\mathcal{M}(D)\backslash\{\varphi_{CA}\varphi_{AC}(a_1),\varphi_{CB}\varphi_{AC}(a_1),\varphi_{CD}\varphi_{AC}(a_1)\}
\\ \notag & \mathcal{U}_A(\varphi_{CA}\varphi_{AC}(a_1),\varphi_{AC}\varphi_{CA}\varphi_{AC}(a_1),\varphi_{AB}\varphi_{CA}\varphi_{AC}(a_1),\varphi_{AD}\varphi_{CA}\varphi_{AC}(a_1))>
\\ \notag & \mathcal{U}_A(\varphi_{CA}\varphi_{AC}(a_1),x,y,z) {\,}{\,}{\,}{\,}{\,}{\,}
\\ \notag &\forall  (x,y,z) \\ \notag &{\,}\in {\,}\mathcal{M}(C)\cup \mathcal{M}(B)\cup \mathcal{M}(D)\backslash\{\varphi_{DA}\varphi_{CD}\varphi_{AC}(a_1),\varphi_{DB}\varphi_{CD}\varphi_{AC}(a_1),\varphi_{DC}\varphi_{CD}\varphi_{AC}(a_1)\}
\\ \notag & \mathcal{U}_D(\varphi_{AD}\varphi_{CA}\varphi_{AC}(a_1),\varphi_{DA}\varphi_{AD}\varphi_{CA}\varphi_{AC}(a_1),\varphi_{DA}\varphi_{AD}\varphi_{CA}\varphi_{AC}(a_1),\varphi_{DB}\varphi_{AD}\varphi_{CA}\varphi_{AC}(a_1))>\\ \notag & \mathcal{U}_D(\varphi_{AD}\varphi_{CA}\varphi_{AC}(a_1),x,y,z) {\,}{\,}{\,}{\,}{\,}{\,}
\\ \notag &\forall  (x,y,z) \\ \notag &{\,}\in {\,}\mathcal{M}(C)\cup \mathcal{M}(B)\cup \mathcal{M}(A)\backslash\{\varphi_{DA}\varphi_{AD}\varphi_{CA}\varphi_{AC}(a_1),\varphi_{DA}\varphi_{AD}\varphi_{CA}\varphi_{AC}(a_1),\varphi_{DB}\varphi_{AD}\varphi_{CA}\varphi_{AC}(a_1)\}
\end{align}

\textbf{Type 4}

\begin{align}
& \mathcal{U}_A(a_1,\varphi_{AC}(a_1),\varphi_{AD}(a_1),\varphi_{AB}(a_1))> \mathcal{U}_A(a_1,x,y,z) {\,}{\,}{\,}{\,}{\,}{\,}
 \\ \notag & \forall  (x,y,z) \\ \notag &{\,}\in {\,}\mathcal{M}(B)\cup \mathcal{M}(C)\cup \mathcal{M}(D)\backslash\{,\varphi_{AD}(a_1),\varphi_{AC}(a_1),\varphi_{AB}(a_1)\}
\\ \notag &
\mathcal{U}_B(\varphi_{AB}(a_1),\varphi_{BA}\varphi_{AB}(a_1),\varphi_{BC}\varphi_{AB}(a_1),\varphi_{BD}\varphi_{AB}(a_1))>
\mathcal{U}_B(\varphi_{AB}(a_1),x,y,z) {\,}{\,}{\,}{\,}{\,}{\,} \\
\notag &\forall (x,y,z)\\ \notag &{\,} \in {\,}\mathcal{M}(A)\cup
\mathcal{M}(C)\cup
\mathcal{M}(D)\backslash\{\varphi_{BA}\varphi_{AB}(a_1),\varphi_{BC}\varphi_{AB}(a_1),\varphi_{BD}\varphi_{AB}(a_1)\}
\\ \notag & \mathcal{U}_A(\varphi_{BA}\varphi_{AB}(a_1),\varphi_{AC}\varphi_{BA}\varphi_{AB}(a_1),\varphi_{AB}\varphi_{BA}\varphi_{AB}(a_1),\varphi_{AD}\varphi_{BA}\varphi_{AB}(a_1))>
\\ \notag & \mathcal{U}_A(\varphi_{BA}\varphi_{AB}(a_1),x,y,z) {\,}{\,}{\,}{\,}{\,}{\,}\forall  (x,y,z) \\ \notag &{\,}
\in {\,}\mathcal{M}(C)\cup \mathcal{M}(B)\cup
\mathcal{M}(D)\backslash\{\varphi_{AC}\varphi_{BA}\varphi_{AB}(a_1),\varphi_{AB}\varphi_{BA}\varphi_{AB}(a_1),\varphi_{AD}\varphi_{BA}\varphi_{AB}(a_1)\}
\\ \notag & \mathcal{U}_C(\varphi_{AC}\varphi_{BA}\varphi_{AB}(a_1),\varphi_{CA}\varphi_{AC}\varphi_{BA}\varphi_{AB}(a_1),\varphi_{CD}\varphi_{AC}\varphi_{BA}\varphi_{AB}(a_1),
\\ \notag &\varphi_{CB}\varphi_{AC}\varphi_{BA}\varphi_{AB}(a_1),\varphi_{DC}\varphi_{AD}\varphi_{CA}\varphi_{AC}(a_1))> \mathcal{U}_C(\varphi_{AD}\varphi_{CA}\varphi_{AC}(a_1),x,y,z) {\,}{\,}{\,}{\,}{\,}\\ \notag &{\,}\forall  (x,y,z) \\ \notag &{\,}\in {\,}\mathcal{M}(D)
\\ \notag & \cup \mathcal{M}(B)\cup \mathcal{M}(A)\backslash\{\varphi_{DA}\varphi_{AD}\varphi_{CA}\varphi_{AC}(a_1),\varphi_{DA}\varphi_{AD}\varphi_{CA}\varphi_{AC}(a_1),
\\ \notag &\varphi_{DB}\varphi_{AD}\varphi_{CA}\varphi_{AC}(a_1),\varphi_{DC}\varphi_{AD}\varphi_{CA}\varphi_{AC}(a_1)\}
\\ \notag & \mathcal{U}_D(\varphi_{CD}\varphi_{AC}\varphi_{BA}\varphi_{AB}(a_1),\varphi_{DA}\varphi_{CD}\varphi_{AC}\varphi_{BA}\varphi_{AB}(a_1),
\\ \notag &\varphi_{DB}\varphi_{CD}\varphi_{AC}\varphi_{BA}\varphi_{AB}(a_1),\varphi_{DB}\varphi_{CD}\varphi_{AC}\varphi_{BA}\varphi_{AB}(a_1),\varphi_{DC}\varphi_{CD}\varphi_{AC}\varphi_{BA}\varphi_{AB}(a_1))>
\\ \notag &\mathcal{U}_D(\varphi_{CD}\varphi_{AC}\varphi_{BA}\varphi_{AB}(a_1),x,y,z) {\,}{\,}{\,}{\,}{\,}{\,}\forall  (x,y,z) \\ \notag & \in \mathcal{M}(C)\cup \mathcal{M}(B)\cup \mathcal{M}(A)\backslash\{\varphi_{DA}\varphi_{CD}\varphi_{AC}\varphi_{BA}\varphi_{AB}(a_1),\\ \notag & \varphi_{DB}\varphi_{CD}\varphi_{AC}\varphi_{BA}\varphi_{AB}(a_1),\varphi_{DB}\varphi_{CD}\varphi_{AC}\varphi_{BA}\varphi_{AB}(a_1),\varphi_{DC}\varphi_{CD}\varphi_{AC}\varphi_{BA}\varphi_{AB}(a_1)\}
\end{align}
Player A would play $a_1$ if
players D, B and C play $d_2$, $b_2$ and $c_2$.
Following arrow C, player C would play $c_2$ if players A, B and D
play simultaneously $a_2$, $b_1$ and $d_2$. Following
arrow B at node 2, player B would play $b_1$ if players A, C and D
play $a_2$, $c_2$ and $d_1$. Following arrow A at
node 2, player A would play $a_2$ if players B, C and D play
$b_2$, $c_2$ and $d_2$. Following arrow D at node 4, player D would play $d_2$ if players A, B and C play $a_1$, $b_2$ and $c_1$.

We have reached the first periodic point. Following arrow
b at node 4, player b would play $b_2$ if players A, C and D play
$a_2$, $c_2$ and $d_2$. Following arrow C at node 4,
player C would play $c_2$ if players A, B and D play $a_2$, $b_1$
and $d_2$. Going back to node 2, following arrow D,
player D would play $d_2$ if players A, B and C play $a_1$, $b_2$
and $c_1$. Going back to node 1, following arrow B at
node 1, player B would play $b_2$ if players A, C and D play
$a_2$, $c_2$ and $d_1$. Following arrow A at node 4,
player A would play $a_2$ if players B, C and D play $b_2$, $c_2$
and $d_2$. Following arrow B at node 5, player B
would play $b_2$ if players A, C and D play $a_2$, $c_2$ and $d_2$. Player D would then play $d_2$, if players A, B and
C play $a_1$, $b_2$ and $c_1$. Following arrow D at
node 5, player D would play $d_2$ if players A, B and C play
$a_1$, $b_2$ and $c_1$. Following arrow C at node 5,
player C would play $c_2$ if players A, B and D play $a_2$, $b_1$
and $d_2$. Finally, $D$ would play $d_2$ if players
if players A B and C play $a_1$, $b_2$ and $c_1$.
\begin{figure}[h!]
\centering
\includegraphics[width=30pc]{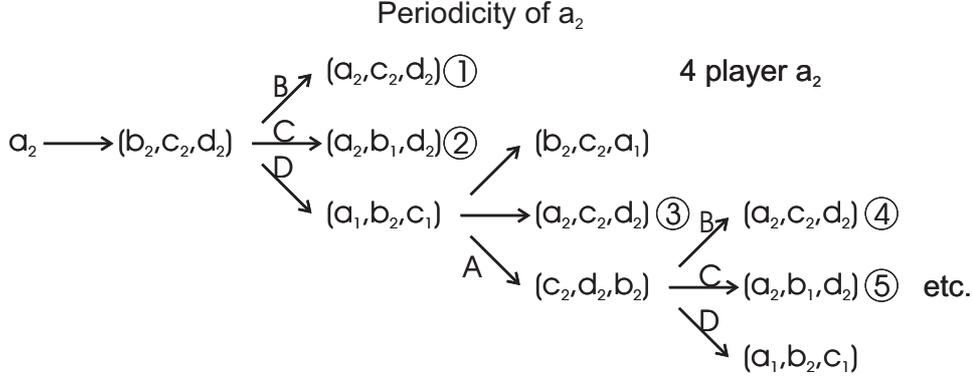}
\caption{Periodicity of strategy $a_1$ for the 4-player game of
Fig. \ref{4playergame}.}\label{4playera2}
\end{figure}

\subsection{The Periodicity Concept for  Simultaneous Perfect Information Multi-Player Games}

After these examples,  we shall now 
generalize the concept of periodicity to general multi-player simultaneous
perfect information strategic form games. Consider a finite player,
finite action, perfect information, simultaneous, strategic form
game, with
\begin{itemize}
\item The set  $I={1,\dots ,N}$ of  players
\item Their strategy spaces  $\mathcal{M}(i),\dots $ and the total strategy
space $\bar{G}=\mathcal{M}(1)\times ...  \mathcal{M}(N)$
\item Their payoff functions
$\mathcal{U}_i(\bar{G}):\bar{G}\rightarrow \Re$. 
\end{itemize}
We define $2N$ continuous maps between the strategy spaces $\mathcal{M}(i)$ and
$\mathcal{M}(j)$, 
\begin{equation}\label{phi1}
\varphi_{ij}:\mathcal{M}(i)\rightarrow \mathcal{M}(j)
\end{equation}
The maps act in such a way that, when starting with an action $x_i$ of player $i$, the following inequalities hold:
\begin{align}\label{inequalityies111}
&\mathcal{U}_i(x_i,\varphi_{ij}(x_i),\varphi_{ik}(x_i),...,\varphi_{il}(x_i))> \mathcal{U}_i(x_i,y_1,y_2,...,y_l) {\,}{\,}{\,}{\,}{\,}{\,}
 \\ \notag &\forall {\,}(y_1,y_2,...,y_l){\,}\in {\,}\mathcal{M}(j)\cup \mathcal{M}(k)\cup,...,\mathcal{M}(l)\backslash\{\varphi_{ij}(x_i),\varphi_{ik}(x_i),...,\varphi_{il}(x_i)\}
\\ \notag & \mathcal{U}_k(\varphi_{ik}(x_i),\varphi_{ki}\varphi_{ik}(x_i),\varphi_{kj}\varphi_{ik}(x_i),...,\varphi_{kl}\varphi_{ik}(x_i))> \mathcal{U}_k(\varphi_{ik}(x_i),y_1,y_2,...,y_l) {\,}{\,}{\,}{\,}{\,}{\,}
 \\ \notag &\forall {\,}(y_1,y_2,...,y_l){\,}\in {\,}\mathcal{M}(j)\cup \mathcal{M}(k)\cup,...,\mathcal{M}(l)\backslash\{\varphi_{ki}\varphi_{ik}(x_i),\varphi_{kj}\varphi_{ik}(x_i),...,\varphi_{kl}\varphi_{ik}(x_i)\}
\\ \notag &
\vdots
\\ \notag & n-step
\\ \notag & \mathcal{U}_m(\underbrace{\varphi_{mm_1}\varphi_{mm_1}\varphi_{m_1m_2}\circ...\circ \varphi_{ik}(x_i)}_{n{\,}times}...\underbrace{\varphi_{mk_1}\circ...\circ\varphi_{ik}(x_i)}_{n+1{\,}times})>
\\ \notag & \mathcal{U}_k(\varphi_{mm_1}\varphi_{mm_1}\varphi_{m_1m_2}\circ...\circ \varphi_{ik}(x_i),y_1,y_2,...,y_l) {\,}{\,}{\,}{\,}{\,}{\,}
 \\ \notag &\forall {\,}(y_1,y_2,...,y_l){\,}\in {\,}\mathcal{M}(j)\cup \mathcal{M}(k)\cup,...,\mathcal{M}(l)\backslash\{...\underbrace{\varphi_{mk_1}\circ...\circ\varphi_{ik}(x_i)}_{n+1{\,}times}\}
\end{align}
We call the action $x_i$ periodic if at some step of the
periodicity algorithm \cite{jostoikonomou}, we have
\begin{equation}\label{properdef}
x_i=\varphi_{mi}\circ...\circ\varphi_{ik}(x_i)
\end{equation}
Let us explain the meaning of each step of the algorithm. Start
with the first step, when $i$ plays $x_i$, his payoff is
maximized when his opponents play a combination of actions
(simultaneously), namely the actions
$(\varphi_{i1}(x_i),\varphi_{i2}(x_i),...,\varphi_{iN}(x_i))$. This procedure is repeated at every  step.

\begin{defi}[Periodicity]

In an $N$-player simultaneous move strategic form game
with finite actions, we define periodic strategies for player A
to be the subset  $\mathcal{P}(A)$ of his available strategies $\mathcal{M}(A)$ for which  there
exists an operator $\mathcal{Q}$: $\mathcal{M}(A)\rightarrow \mathcal{M}(A)$, with $\mathcal{Q}=\varphi_{ij}(x_i),\varphi_{ik}(x_i),...,\varphi_{il}(x_i)$ for which $\mathcal{Q}x_i=x_i$ such 
that the inequalities of relation (\ref{inequalityies111}) are fulfilled at each step.
\end{defi}
Periodic strategies are structures inherent to every non-trivial
finite action $N$-player strategic form game. 
\begin{theo}\label{1}
Every finite action simultaneous $N$-player strategic form game contains at least one periodic action.
\end{theo}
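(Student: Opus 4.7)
My plan is to reduce the theorem to a pigeonhole argument on a finite directed graph built from the maps $\varphi_{ij}$. First, I would form the graph $G$ whose vertex set is the disjoint union $V = \bigsqcup_{i=1}^{N} \mathcal{M}(i)$ of all players' action sets, placing a directed edge from $x_i \in \mathcal{M}(i)$ to $\varphi_{ij}(x_i) \in \mathcal{M}(j)$ for every $j \neq i$. If the joint maximizer defining $(\varphi_{ij}(x_i))_{j \neq i}$ is not unique, I fix once and for all an arbitrary selection (for example a lexicographic tie-break on the joint opponent profiles). The key observations are that $V$ is finite, since each $\mathcal{M}(i)$ is, and that each vertex has out-degree exactly $N-1 \ge 1$.

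Next, I invoke the standard fact that a finite digraph in which every vertex has positive out-degree contains a directed cycle: starting at any vertex and following outgoing edges produces an infinite walk $v_0, v_1, v_2, \dots$, and by the pigeonhole principle there exist $0 \le s < t$ with $v_s = v_t$, so the sub-walk $v_s \to v_{s+1} \to \dots \to v_t = v_s$ is a directed cycle. Setting $y := v_s \in \mathcal{M}(p)$ for the corresponding player $p$, every edge in the cycle is of the form $v_r \to \varphi_{i_r i_{r+1}}(v_r)$ with $v_r \in \mathcal{M}(i_r)$ and $i_s = i_t = p$, and composing along the cycle yields an operator $\mathcal{Q} = \varphi_{i_{t-1} p} \circ \varphi_{i_{t-2} i_{t-1}} \circ \dots \circ \varphi_{p\, i_{s+1}}$ satisfying $\mathcal{Q}(y) = y$. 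This is precisely the relation $(\ref{properdef})$ in the definition of periodicity, so $y$ is a periodic action for player $p$.

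It remains to check that the inequalities in $(\ref{inequalityies111})$ are fulfilled at every step, but this is automatic: at step $r$ the tuple $(\varphi_{i_r j}(v_r))_{j \neq i_r}$ is by construction the joint best response that maximizes $\mathcal{U}_{i_r}(v_r, \cdot)$, which is exactly what the inequality demands. The only real subtlety is the non-uniqueness of best responses just mentioned; once a selection rule is fixed the strict inequalities in $(\ref{inequalityies111})$ become weak inequalities on the tied profiles, but this does not affect the existence of a cycle (one can alternatively restrict to the generic case where all joint best responses are unique). I do not expect any genuine obstacle beyond this bookkeeping: the whole argument turns on finiteness of $V$ together with positive out-degree at every vertex, and no fixed-point theorem or limiting argument is required because we are in a purely combinatorial setting.
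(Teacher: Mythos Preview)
Your proposal is correct and follows essentially the same approach as the paper: both arguments iterate the maps $\varphi_{ij}$ starting from an arbitrary action and invoke finiteness (pigeonhole) to force a repetition, hence a cycle. Your graph-theoretic phrasing and explicit handling of tie-breaking are cleaner than the paper's informal version, but the underlying idea is identical.
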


\begin{proof}
The proof of the theorem is very easy, since the inequalities
(\ref{inequalityies111}) hold. Let us consider player $i$ and  start from
an action $x_*$ which is assumed to be non-periodic. If we apply  the maps $\varphi{(ij)}$
to $x_*$, so that the inequalities
(\ref{inequalityies111}) are satisfied at every step, then, since
the game contains a finite number $n$ of actions, there
will be an action $x_a$ for which there exists an operator
constructed from a finite number of maps
$\mathcal{Q}=\underbrace{\varphi_{ij}(x_i),\varphi_{ik}(x_i),...,\varphi_{il}(x_i)}_{finite{\,}{\,}times}$,
so that $\mathcal{Q} x_a=x_a$. If the above is not true for any
other action apart from $x_a$, then since the game contains a
finite number of actions, this would imply that $x_a$ is periodic.
So every finite action game contains at least one periodic action.\\
A more  detailed
proof goes as follows. Suppose we start with the non-periodic action $x_i$ of player $i$.  Then
\begin{align}\label{fff}
&\mathcal{U}_i(x_i,\varphi_{ij}(x_i),\varphi_{ik}(x_i),...,\varphi_{il}(x_i))> \mathcal{U}_i(x_i,y_1,y_2,...,y_l)... {\,}{\,}{\,}{\,}{\,}{\,}
\end{align}
The algorithm will continue for some player $k$,
\begin{align}\label{mmm}
\mathcal{U}_k(\varphi_{ik}(x_i),\varphi_{ki}\varphi_{ik}(x_i),\varphi_{kj}\varphi_{ik}(x_i),...,\varphi_{kl}\varphi_{ik}(x_i))> \mathcal{U}_k(\varphi_{ik}(x_i),y_1,y_2,...,y_l).....
\end{align}
After this step, the algorithm will continue for some of the actions $\varphi_{ki}\varphi_{ik}\circ...\circ \varphi_{kl}\varphi_{ik}$, if none of the actions is repeated. Suppose the algorithm continues and it is the turn of player $m$, with 
\begin{align}\label{mmm}
\mathcal{U}_m(\varphi_{km}(x_i)\varphi_{ik}(x_i),\varphi_{mk}\varphi_{km}\varphi_{ik}(x_i),...,\varphi_{ml}\varphi_{km}\varphi_{ik}(x_i))> \mathcal{U}_m(\varphi_{km}(x_i)\varphi_{ik}(x_i),y_1,y_2,...,y_l).....
\end{align}
Since this is deterministic and there are only finitely many players and actions, it eventually has to become periodic. 
\end{proof}
The above reasoning  reveals
another property of the set of periodic actions in finite
multi-player simultaneous strategic form games. Recall
the definition of set stable strategies from Bernheim
\cite{bernheim}. We modify this definition of set stability as
follows:

\begin{defi} [Set Stability]

Let $\mathcal{Q}$ be an automorphism
$\mathcal{Q}:\mathcal{M}(A)\rightarrow \mathcal{M}(A)$. In
addition, let $A\subseteq A\cup B \subseteq \mathcal{M}(A)$, with
$A\cap B= \varnothing$. The set $A$ is set stable under the action
of the map $\mathcal{Q}$ if, for any initial $x_0$ $\in$ $A\cup B$
and any sequence $x_k$ formed by taking $x_{k+1}$ $\in$
$\mathcal{Q}(x_k)$, there exists $x_K$ $\in$ $A\cup B$ such that
$d(x_K,x^1)<\epsilon$, with $x^1$ $\in$ $A$. For finite sets, this
implies that any sequence formed by applying  the operator
$\mathcal{Q}$  produces an $x_k$ for any initial $x_0$,
with  $x_k$ belonging to the set stable set $A$.
\end{defi}

\begin{theo}\label{2}

Let $\mathcal{P}(i)$ denote the set of periodic strategies for player i. The set $\mathcal{P}(i)$ is set stable, under the action of the maps $\varphi_{ij}$.

\end{theo}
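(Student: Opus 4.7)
The plan is to decompose the statement into two independent claims: (i) \emph{forward invariance}, that $\mathcal{P}(i)$ is closed under the action of the periodicity maps, in the sense that composing the $\varphi_{jk}$'s along any branch of the algorithm and returning to $\mathcal{M}(i)$ sends periodic points to periodic points; and (ii) \emph{reachability}, that every orbit initialised at an arbitrary $x_{0}\in\mathcal{M}(i)$ lands in $\mathcal{P}(i)$ after finitely many steps. Combined with the finiteness of $\mathcal{M}(i)$, these two properties deliver exactly the condition of Definition 2, with $A=\mathcal{P}(i)$ and $B=\mathcal{M}(i)\setminus\mathcal{P}(i)$.

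For reachability I would invoke Theorem \ref{1} directly. Starting from any $x_{0}\in\mathcal{M}(i)$, the periodicity algorithm deterministically generates a sequence of actions bouncing across the spaces $\mathcal{M}(j)$, each step being a payoff-maximising choice as prescribed by (\ref{inequalityies111}). Since $\bar{G}$ is finite, the induced sequence in $\mathcal{M}(i)$ must eventually revisit a previously seen action $x_{a}$; the finite composition of $\varphi$-maps $\mathcal{Q}$ connecting the two occurrences of $x_{a}$ satisfies $\mathcal{Q}x_{a}=x_{a}$, so $x_{a}\in\mathcal{P}(i)$. This is precisely the entry into the stable set required by the definition.

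For invariance, take $x\in\mathcal{P}(i)$ with $\mathcal{Q}x=x$, and write $\mathcal{Q}=\mathcal{Q}_{2}\circ\mathcal{Q}_{1}$ for any split at which the composition $\mathcal{Q}_{1}$ terminates in $\mathcal{M}(i)$, producing an intermediate action $y=\mathcal{Q}_{1}x$. Then $(\mathcal{Q}_{1}\circ\mathcal{Q}_{2})y=\mathcal{Q}_{1}x=y$, so $y$ is certified periodic by the cyclically shifted operator $\mathcal{Q}'=\mathcal{Q}_{1}\circ\mathcal{Q}_{2}$. The inequalities (\ref{inequalityies111}) are inherited automatically, because each inequality in the chain is local to one step of the algorithm and does not depend on where the cycle is cut; hence $y\in\mathcal{P}(i)$ as well. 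Iterating this observation shows that every action encountered along the periodic cycle through $x$ also lies in $\mathcal{P}(i)$.

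The main obstacle I anticipate is bookkeeping: the operator $\mathcal{Q}$ is a composition of maps $\varphi_{jk}$ between \emph{different} strategy spaces, so one must be explicit about which iterates live in $\mathcal{M}(i)$ versus in $\mathcal{M}(j)$ for $j\neq i$, and the phrase ``action of the maps $\varphi_{ij}$'' in Theorem \ref{2} must be read as the induced self-map on $\mathcal{M}(i)$ obtained by chaining $\varphi$'s until one returns to $\mathcal{M}(i)$. Once this indexing is fixed, invariance under cyclic reindexing of $\mathcal{Q}$ is immediate and the proof collapses to an application of Theorem \ref{1} together with the pigeonhole argument on the finite strategy space.
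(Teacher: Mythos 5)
Your proposal is correct and follows essentially the same route as the paper: the paper's proof of Theorem \ref{2} is simply a pointer back to the proof of Theorem \ref{1}, i.e.\ the same pigeonhole argument on the finite strategy space showing that the deterministic algorithm started at any $x_{0}$ must eventually enter a cycle, which is exactly your reachability step. Your additional forward-invariance step (cyclic reindexing of $\mathcal{Q}$ certifies every action on the cycle as periodic) and your remark on the bookkeeping of which iterates live in $\mathcal{M}(i)$ are details the paper leaves implicit, but they do not change the underlying argument.
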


\noindent Thus,  the periodicity
diagram of any non-periodic action $x_0$ results in the
periodicity cycle of some action $x_K$.

\begin{proof} The proof of this theorem is contained in the proof of Theorem 1.
\end{proof}

\subsection{New Features; Remarks}

There is  one  difference between  the 2- and the
multi-player periodicity. In the two-player case, the utility
functions chain is 
\begin{align}\label{g2perc}
& U_A{\,}{\,}{\,}\xrightarrow{P}{\,}{\,}{\,} U_B{\,}{\,}{\,}\xrightarrow{P}{\,}{\,}{\,} U_A {\,}{\,}{\,}\xrightarrow{P}{\,}{\,}{\,} U_B....
\end{align}
and the periodicity occurs for $U_A$, if  we start with
a periodic action of player A. In the multi-player case,
although we may start with an action $x_i$ of player $i$ and
 the utility $U_i$, the periodicity might occur at the
utility function of another player, say $U_m$. Let us further
explain this version of periodicity. At the end of the
algorithm, player $m$  will play one of his actions, when his
opponents play some actions, one of which is $x_i$, corresponding
to player $i$. However, this does not exclude the fact that we
might return to the utility function of player $i$ again. One
example of this kind is Type 2 periodicity of player C, for the
three player game we studied previously in this section, or the
periodicity of $a_2$ corresponding to the same game. Having
studied the perfect information case, we now generalize our
framework to include non-perfect information games.

\section{Non Perfect Information Games -- Bayesian Games}

In this section we address the issue of periodicity in the case of
finite games with incomplete information. Our analysis on
incomplete information games is based mainly on references
\cite{rat3,harsanyi1,harsanyi2,harsanyi3,tan88,new2,new3,new4,new5,new6,new7,new8}
and references therein. In Bayesian strategic form games and more
generally in strategic form games with incomplete information, we
can always associate some relaated complete
information strategic form games  to the game in question. The corresponding
strategic form games are called ex-ante and interim strategic form
games. Exploiting these two, we will define and study the ex-ante
and interim rationalizable strategies and through these, the
periodicity in the case of non-perfect information games. With
respect to the latter, the interim rationalizability has two
versions, the interim independent and interim correlated
rationalizability. Both can be found by constructing the interim
independent and interim correlated strategic form game from the
initial Bayesian game. Since the Bayesian games can be represented
in terms of strategic form games, all the periodicity concepts
that we developed in the 2-player \cite{jostoikonomou} and
multi-player cases hold true. For simplicity, we shall only present the case with two
players and two actions for each player. The findings can be easily be generalized to the
multi-player case. The interim independent
strategic form game with the Bayesian game having initially two
players corresponds to a three player game. Let us start with the
ex-ante game.
A Bayesian game is a list $(N,A,\Theta,T,u,p)$, with

\begin{itemize}
 \item $N$, the number of players

\item $A=(A_i)_{i\in {\,}N}$, the set of action profiles with generic member $a=(a_i)_{i\in {\,}N}$

\item $\Theta$, the set of all possible parameters $\theta_i$ (in our case usually two different matrices for one of the two players)

\item $T=(T_i)_{i\in{\,}N}$ the set of types with generic member $t=(t_i)_{i\in{\,}N}$

\item $u_i:\Theta \times A\rightarrow R$, the payoff function of player i

\item $p_i=p_i(\cdot \mid t_i)$ $\in$ $\Delta (\Theta\times T_{-i})$ is the belief of the type $t_i$ about $\theta,t_{-i}$
\end{itemize}
Each player $i$ knows his own type $t_i$ but does not necessarily
know $\theta$, or the other players' types, about which he has a
belief $p_i(\cdot \mid t_i)$. The game is defined in terms of
players interim beliefs $p_i(\cdot \mid t_i)$, which they obtain
after they observe their own type, but before taking their action.
The game can also be defined by ex-ante beliefs $p_i$ $\in$
$\Delta (\Theta \times T)$ for some belief $p_i$. The game has a
common prior, if there exists $\pi$ $\in$ $\Delta (\Theta \times
T)$ such that:
\begin{equation}\label{cp}
p_i(\cdot \mid t_i)=\pi (\cdot \mid t_i ),{\,}\forall{\,}t_i{\,}\in{\,}T_i,{\,}\forall{\,}i{\,}\in{\,}N
\end{equation}
In that case, the game is denoted by $(N,A,\Theta,u,\pi)$. When
modelling incomplete information, there is often no ex-ante stage
or an explicit information structure in which players observe
values of some signals. In the modelling stage, each player $i$
has the following hierarchical belief system:

\begin{itemize}
 \item Some belief $\tau_i^1$ $\in$ $\Delta (\Theta)$, about the payoffs (and the other aspects of the physical world), a belief that is often referred to as the first order belief of $i$

 \item Some belief $\tau_i^2$ $\in$ $\Delta (\Theta\times \Delta (\Theta_{\Theta}))$ about the payoffs and the other players' first order beliefs $((\theta,\tau_{-i}^1))$

 \item Iteratively, for each $n$, some belief $\tau_i^n$ about the payoffs and the other players'  beliefs of all orders $<n$, $((\theta,\tau_{-i}^1,\tau_{-i}^2,\dots \tau_{-i}^{n-1}))$
\end{itemize}
In the Harsanyi type space formalism
\cite{harsanyi1,harsanyi2,harsanyi3}, the infinite belief
hierarchies are modelled using a type space $(\Theta ,T, P)$ and
also using a type $t_i$ $\in$ $T_i$ in the following way: Given a
type $t_i$ and a type space $(\Theta ,T, P)$, one can compute the
first order belief of a type $t_i$, by
\begin{equation}\label{h1}
h_i^1(\cdot \mid t_i)=\mathrm{marg}_{\theta} p(\cdot \mid t_i)
\end{equation}
so that
\begin{equation}\label{h2}
 h_i^1(\theta \mid t_i)=\sum_{t_{-i}}\mathrm{marg}_{\theta} p(\theta,t_{-i} \mid t_i)
\end{equation}
and the second order by
\begin{equation}\label{h3}
h_i^2(\theta,\hat{h}_{-i}^1)=\sum_{t_{-i}\mid h_{-i}^1(\cdot\mid t_{-i}=\hat{h}_{-i}^1}p(\theta, t_{-i}\mid t_i)
\end{equation}
A type space $(\Theta,t,p)$, and a type $t_i$ $\in$ $T_i$ model a belief hierarchy $(\tau_i^1,\tau_i^2,...)$ if
\begin{equation}\label{}
h_i^k(\cdot \mid t_i=\tau_i^k),{\,}\forall {\,}k
\end{equation}
Given any Bayesian game $(N,A,\Theta,u,\pi)$, with common prior
$\pi$, one can define the ex-ante game, which we denote by 
$G_{ex}=(N,S,U)$, where $S_i=A_i^{T_i}$ and
\begin{equation}\label{bt}
U_i=E_{\pi}[u_i(\theta,s(t)]
\end{equation}
for each $i$ $\in$ $N$ and $s$ $\in$ ${\,}$ $S$. For any Bayesian
game $(N,A,\Theta,T,u,p)$ one can also define the interim game,
which we denote by $G_{int}=(\hat{N},\hat{S},\hat{U})$, where
$\hat{N}=\cup_{i{\,}\in{\,}T_i}$ and also $\hat{S}_{t_i}=A_i$ for
each $t_i$ $\in$ $\hat{N}$ and
\begin{equation}\label{intg}
U_{t_i}(\hat{s})=E[u_i(\theta,\hat{s}_{t_{-i}}\mid p_i (\cdot \mid t_i)]=\sum_{(\theta,t_{-i})}u_{i}(\theta,\hat{s}_{t_{-i}})p(\theta,t_{-i}\mid t_i)
\end{equation}
for each $i$ $\in$ $N$ and $s$ $\in$ ${\,}$ $S$.

\subsection{Ex-ante game and Ex-ante Rationalizability}

Given any Bayesian game $(N,A,\Theta,T,u,p)$ and a player
$i \in N$, a strategy $s_i:T_i\rightarrow A_i$ is said to be
ex-ante rationalizable iff $s_i$ is rationalizable in the
corresponding ex-ante strategic form game $G_{ant}$
\cite{rat2,rat3,pereabook}. Ex-ante rationalizability makes sense
if there is an ex-ante stage in the game. In that case, ex-ante
rationalizability captures precisely the implications of common
knowledge of rationality as perceived in the ex-ante planning
stage of the game \cite{rat2,rat3}. It does impose unnecessary
restrictions on players' beliefs from an interim perspective
however. Let us look at the following example
\cite{rat2,rat3,pereabook}: Consider a Bayesian game with the
following characteristics:
\begin{itemize}
 \item $N=(1,2)$)
\item $\Theta = (\theta,\theta ')$
\item $T = (t_1,t_1')\times t_2$
\item $p(\theta, t_1,t_2)=p(\theta ', t_1',t_2)=\frac{1}{2}$
\end{itemize}

\newpage
The action space and the payoff functions are given by
\begin{table}[h]
\centering
  \begin{tabular}{| l |l |l | }
    \hline
      $\theta $ &  $L$ & $R$ \\ \hline
  $U$ & 1,$\epsilon$ & -2,0   \\ \hline
    $D$ & 0,0 & 0,1\\
    \hline
 \end{tabular},{\,}{\,}{\,}{\,} \begin{tabular}{| l |l |l | }
    \hline
      $\theta '$ &  $L$ & $R$ \\ \hline
  $U$ & -2,$\epsilon$ & 1,0   \\ \hline
    $D$ & 0,0 & 5,1\\
    \hline
 \end{tabular}
\end{table}

Here, player A has two types corresponding to two different
payoff actions. Player B has only one payoff table and one type.
The ex-ante representation of this game is equal to
\begin{table}[h]
\centering
  \begin{tabular}{| l |l |l | }
    \hline
       &  $L$ & $R$ \\ \hline
  $UU$ & $-1/2$,$\epsilon$ & $-1/2$,$\epsilon$   \\ \hline
    $UD$ & $1/2$,$\epsilon$ & $-1$,$1/2$\\
    \hline
$DU$ & $-1$,$\epsilon/2$ & $1/2$,$1/2$   \\ \hline
$DD$ & $0$,$0$ & 0,1   \\ \hline
 \end{tabular}
\end{table}

To every Bayesian game corresponds an ex-ante perfect information
strategic form game. The actions that are rationalizable in the
ex-ante strategic form game are called ex-ante rationalizable
actions. The rationalizable strategy profile in the case at hand
is $S^{\infty}(G_{ant}=(DU,R))$. The periodicity cycle of this
strategy is
\begin{align}\label{ccfgbfg1uhy23}
DU{\,}{\,}{\,}\xrightarrow{P}{\,}{\,}{\,} R{\,}{\,}{\,}\xrightarrow{P}{\,}{\,}{\,} DU
\end{align}
In addition, we can see that the theorem which relates types to
periodicity number holds true, since there are two  types needed to describes
this periodic cycles. In this case, the
types are the ones that correspond to the perfect information
ex-ante strategic form game, so these are seen in a perfect
information perspective. Of course all the theorems holding true
for finite simultaneous strategic form games, hold also true for
Bayesian games since the latter are equivalent to perfect
information strategic form games. We now proceed to interim
rationalizability related periodic equilibria.

\subsection{Interim Rationalizability}

There are conflicting  notions of interim
rationalizability in incomplete information games in the literature. One
straightforward notion of interim rationalizability is to apply
rationalizability to the interim game $G_{int}$. An embedded
assumption of the interim game is that it is common knowledge
that the belief of a player $i$ about $\theta_{-i}$, which is
given by $p_i(\cdot \mid t_i)$, is independent of his belief about
the other players' actions. In particular, his belief about
$(\theta,t_{-i},a_i)$ is derived from some belief $p_i(\cdot \mid
t_i)\times \mu_{t_i}$ for some $\mu_{t_i}$ $\in$ $\Delta
(A_{-i}^{T_{-i}})$. This is because we have taken the expectations
with respect to $p_i(\cdot \mid t_i)$, in defining the interim
game $G_{int}$, before considering his beliefs about the other
players' actions. Because of this independence assumption, such a 
rationalizability notion is called interim independent
rationalizability. Through the interim rationalizability we will
make contact with the periodicity concept in this case as well.

\subsubsection{Interim Independent Rationalizability}

Given any Bayesian game $B=(N,A,\Theta , T, u,p)$ and any type
$t_i$ of player $i$ $\in$ $N$, an action $a_i$ $\in$ $A_i$ is said
to be interim independent rationalizable for $t_i$, iff $a_i$ is
rationalizable for $t_i$ in the interim game $G_{int}$. The
interim independent Rationalizability is the most complex type of
rationalizability among all the rationalizability types for
Bayesian games. Consider the Bayesian game we used in the previous
example of the ex-ante game. The corresponding interim independent
game is actually a 3-player game with player-type set
$N=(t_1,t_1',t_2)$, and with the following payoff table:
\begin{table}[h]
\centering $U_1$
  \begin{tabular}{| l |l |l | }
    \hline
      $\theta $ &  $L$ & $R$ \\ \hline
  $U$ & 1,$\epsilon$,-2 & -2,0,1   \\ \hline
    $D$ & 0,$\epsilon /2$,-2 & 0,1/2,1\\
    \hline
 \end{tabular},{\,}{\,}{\,}{\,} $D_1$ \begin{tabular}{| l |l |l | }
    \hline
      $\theta '$ &  $L$ & $R$ \\ \hline
  $U$ & 1,$\epsilon /2$,0 & -2,1/2,0   \\ \hline
    $D$ & 0,0,0 & 0,1,0\\
    \hline
 \end{tabular}
\end{table}

The first player $t_1$ chooses the rows, the player $t_2$ the
columns and finally type $t_1'$ chooses the matrices. All actions
are rationalizable as can be easily checked. Let us see the
periodicity graphs for the above game.  For instance for $U$, 
the corresponding periodicity graph appears in Fig.
\ref{BAYESIANPERIOD}.

 This example is somewhat degenerate, but the periodicity study is identical to the study of periodicity in a 3-player strategic form game.
 This also proves that indirectly, using the interim rationalizability strategies, we relate the non-perfect information game to a multi-player,
 perfect information, simultaneous, strategic form game and therefore all the periodicity theorems hold true in this case as well .
 We further proceed in the same fashion and relate periodicity to the Interim Correlated Rationalizability concept.

\begin{figure}[h!]
\centering
\includegraphics[width=25pc]{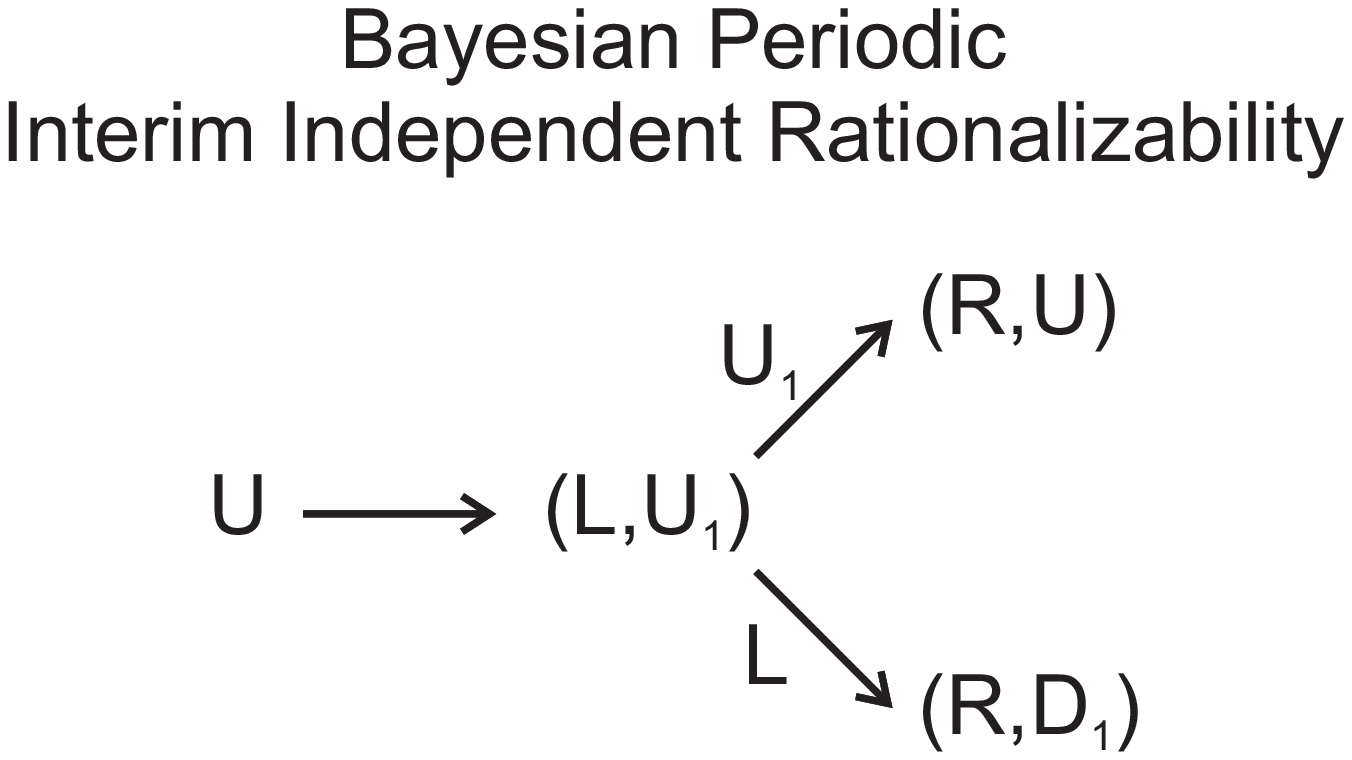}
\caption{Periodicity for a 3-player Bayesian Game. The first
player with type $t_1$ chooses the rows with actions $U$ and $D$,
the second player with type $t_2$ the columns, with actions $L$
and $R$ and finally the third player with type $t_1'$ chooses the
two matrices of the game.}\label{BAYESIANPERIOD}
\end{figure}

\subsubsection{Interim Correlated Rationalizability}

Consider a Bayesian game $B=(N,A,\Theta, T,u,p)$. Interim
correlated rationalizability \cite{rat2,rat3} allows more beliefs
than interim independent rationalizability, and it is a weaker concept
in reference to the latter. When all types have positive
probability, ex ante rationalizability is stronger than the other
two interim rationalizabilities. So all ex-ante rationalizable
actions are interim independent and all interim independent
rationalizable actions are interim correlated rationalizable
actions. The converse is not true. Thus the following holds true
\cite{rat2,rat3}:
\begin{equation}
\mathrm{ex-ante}\subset \mathrm{Interim-independent}\subset \mathrm{interim-correlated}
\end{equation}
Interim correlated rationalizability captures the implications of
common knowledge of rationality precisely \cite{rat2,rat3}. In
addition, interim independent rationalizability depends on the way
the hierarchies are modelled, in that there can be multiple
representations of the same hierarchy, with distinct sets of
interim independent rationalizable actions. Moreover, one cannot
have any extra robust prediction from refining interim correlated
rationalizability. Any prediction that does not follow from
interim correlated rationalizability alone relies on the
assumptions about the infinite hierarchy of beliefs. A researcher
cannot verify such a prediction in the modelling stage without the
knowledge of the  infinite hierarchy of beliefs. Now, the interim
correlated rationalizable actions are the ones that are
rationalizable in the interim correlated game. Let us see how this
game is found, by using a Bayesian game \cite{rat2,rat3}. Take
$\Theta =(-1,1)$, $N=(1,2)$ and the payoff matrices are:
\begin{table}[h]
\centering
  \begin{tabular}{| l |l |l | l |}
    \hline
      $\theta =1$ &  $b_1$ & $b_2$ & $b_3$\\ \hline
  $a_1$ & 1,1 & -10,10 & -10,0  \\ \hline
    $a_2$ & -10,-10 & 1,1& -10,0\\ \hline
$a_3$ & 0,-10 & 0,-10& 0,0 \\
    \hline \end{tabular},{\,}{\,}{\,}{\,}\begin{tabular}{| l |l |l | l |}
    \hline
      $\theta =-1$ &  $b_1$ & $b_2$ & $b_3$\\ \hline
  $a_1$ & -10,-10 & 1,1 & -10,0  \\ \hline
    $a_2$ & 1,1 & -10,-10& -10,0\\
    \hline
$a_3$ & 0,-10 & 0,-10& 0,0 \\
    \hline
 \end{tabular}
\caption{Game 1B}
\label{game1}
\end{table}

We consider the type space $T=(t_1,t_2)$, with $p(\theta =1,t)=p(\theta =-1,t)=1/2$. The interim game is the following complete information game:
\begin{table}[h]
\centering
  \begin{tabular}{| l |l |l | l |}
    \hline
      $\theta =1$ &  $b_1$ & $b_2$ & $b_3$\\ \hline
  $a_1$ & $-9/2$,$-9/2$ & $-9/2$,$-9/2$ & -10,0  \\ \hline
    $a_2$ & $-9/2$,$-9/2$ & $-9/2$,$-9/2$ & -10,0\\ \hline
$a_3$ & 0,-10 & 0,-10& 0,0 \\
    \hline \end{tabular}
\caption{Game 1B}
\label{game1}
\end{table}

It is easy to show that even in this Bayesian framework we can
find a periodic action and specifically in the interim reduced
game. Thereby, we indirectly demonstrated that by using the
various imperfect information rationalizability concepts, we
relate periodicity with Bayesian games in general. Therefore we
may formalize the periodicity concept in Bayesian games.

\subsubsection{Periodicity and Bayesian Games}

We can easily understand that since every Bayesian game
corresponds to some perfect information, finite player, finite
action, strategic form game, the following theorem holds.
\begin{theo}\label{1}
Every finite action simultaneous $N$-player Bayesian strategic form game contains at least one periodic action.
\end{theo}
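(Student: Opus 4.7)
The plan is to reduce the statement to the perfect-information multi-player case already established in Section 2.1. The whole preceding discussion of ex-ante, interim independent, and interim correlated games is set up precisely so that this reduction goes through: every Bayesian game $B=(N,A,\Theta,T,u,p)$ admits a complete-information representation, for instance the ex-ante game $G_{ex}=(N,S,U)$ with $S_i=A_i^{T_i}$ and $U_i(s)=E_{\pi}[u_i(\theta,s(t))]$, or the interim game $G_{int}=(\hat{N},\hat{S},\hat{U})$ in which each player $i$ is split into one copy per type $t_i\in T_i$.

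First I would fix such a representation, for definiteness $G_{ex}$. Since $N$, each action set $A_i$, and each type set $T_i$ are finite, each pure-strategy set $S_i=A_i^{T_i}$ is finite, so $G_{ex}$ is a finite-action, $N$-player, simultaneous, perfect-information strategic form game in the exact sense of Section 2.1. The existence result of that section then applies to $G_{ex}$ directly, producing a player $i$, a strategy $x_i\in S_i$, and a composition $\mathcal{Q}=\varphi_{mi}\circ\cdots\circ\varphi_{ik}$ of best-response maps of $G_{ex}$ with $\mathcal{Q}x_i=x_i$ and with the inequalities (\ref{inequalityies111}) verified at every intermediate step.

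Next I would pull this periodic object back to $B$. A pure strategy $x_i\in S_i$ is by definition a function $t_i\mapsto x_i(t_i)\in A_i$, so periodicity in $G_{ex}$ is exactly periodicity of a type-contingent action plan in the Bayesian game. The maps $\varphi_{ij}$ in $G_{ex}$ are built from the payoffs $U_i$, and by linearity of expectation these are simply expected Bayesian payoffs against the opponents' type-contingent plans weighted by $\pi$. Consequently the $G_{ex}$ inequalities translate verbatim into expected-utility inequalities for the Bayesian primitives $u_i$ and the conditional beliefs $p_i(\cdot\mid t_i)$, and the periodic strategy of $G_{ex}$ becomes a periodic action of $B$.

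The only step requiring real care is this translation of the inequalities, which is where I expect the main obstacle. When a common prior is not assumed, I would run the same argument using $G_{int}$, which, as already noted in Section 3.2, is itself a perfect-information multi-player strategic form game in which types play the role of players; Theorem 2.1 applies to it directly, and the pull-back to $B$ is even more immediate because no averaging against $\pi$ needs to be unfolded. The conceptual obstacle is therefore not the existence of a periodic action, but the choice of the right complete-information avatar of $B$; once that is made, finiteness together with the Section 2.1 existence theorem does all the work.
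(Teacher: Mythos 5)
Your proposal is correct and follows essentially the same route as the paper: the paper's proof likewise observes that every finite Bayesian game corresponds to a finite-action, finite-player ex-ante or interim strategic form game, and then invokes the existence theorem for perfect-information games from Section 2.1. Your additional care about pulling the periodic object back to the Bayesian primitives is a reasonable elaboration of a step the paper leaves implicit, but it does not change the argument.
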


\begin{proof}
Every finite player
finite action strategic form game corresponds to an interim game
or an ex-ante game, which are finite action finite player
games. Therefore since every finite action, finite player
strategic form game has at least a periodic action, it follows
that this is also true for every finite action, finite player, Bayesian strategic form
game.
\end{proof}

Moreover, all the arguments that hold  for perfect information
games also hold  for the ex-ante and interim representations of a
strategic form game. So we can generalize these arguments to
Bayesian games. For the ex-ante and interim correlated
representations of a Bayesian game, the following theorem holds.

\begin{theo}

In a two player perfect information ex-ante and interim correlated
representation of a two-player Bayesian strategic form game, the
number of types $N_{t_i}$ corresponding to the periodic cycle of
an ex-ante or interim correlated rationalizable periodic action
is
\begin{equation}\label{mainsecresults}
N_{t_i}=2{\,}n
\end{equation}
The types are those corresponding to the perfect information
representation of the Bayesian game and not those corresponding
to the incomplete information game.
\end{theo}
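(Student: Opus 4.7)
The plan is to reduce the claim to the two-player perfect information setting and then exploit the alternating structure of the periodicity chain that was already displayed in equation (26) of the paper. The first observation is that both the ex-ante game $G_{ex}=(N,S,U)$ and the interim correlated representation of a two-player Bayesian game are themselves finite two-player perfect information strategic form games; this is exactly how the Bayesian game was encoded in the examples of Section 3 (the ex-ante table with rows $UU,UD,DU,DD$ and columns $L,R$ is literally a $2$-player perfect information matrix). Hence Theorems 3.1 and 3.2, already proved for the perfect information case, guarantee existence of a periodic action and the set-stability of $\mathcal{P}(i)$.

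Next I would invoke the structural remark made at the start of Section 2.2: in a two-player perfect information game the periodicity diagram necessarily follows the alternating utility chain
\begin{align*}
U_A \xrightarrow{P} U_B \xrightarrow{P} U_A \xrightarrow{P} U_B \xrightarrow{P} \cdots
\end{align*}
because the only nontrivial maps are $\varphi_{AB}$ and $\varphi_{BA}$, and each $\varphi_{AB}$ sends $\mathcal{M}(A)$ into $\mathcal{M}(B)$ and vice versa. Thus if $x_i\in\mathcal{M}(A)$ is periodic and $\mathcal{Q}=\varphi_{BA}\varphi_{AB}\cdots\varphi_{AB}$ satisfies $\mathcal{Q}x_i=x_i$, the composition $\mathcal{Q}$ must send $\mathcal{M}(A)$ back to $\mathcal{M}(A)$, which forces the number of factors in $\mathcal{Q}$ to be even. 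Writing this even number as $2n$ yields a periodicity cycle of length $2n$.

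The last step is to match cycle length with the number of types. In the ex-ante and interim correlated representations, each node along the periodicity cycle corresponds to choosing an action for exactly one type of one player (this is how the two representations convert the Bayesian structure into a perfect information tableau: every type becomes an independent ``action slot'' in the strategic form game). Therefore the number of types $N_{t_i}$ occurring along one full periodic cycle equals the number of arrows in the chain, namely $2n$. Crucially, these are the types associated with the perfect information representation (where types are treated as separate decision nodes of a surrogate player), not the types viewed through the original incomplete information lens, which matches the final sentence of the statement.

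The main obstacle, and the only place where care is needed, is Step 3: making precise that each composition factor $\varphi_{AB}$ or $\varphi_{BA}$ in the ex-ante/interim correlated representation corresponds bijectively to exactly one type appearing in the cycle. This requires unpacking the definition of $S_i=A_i^{T_i}$ in $G_{ex}$ and the analogous construction for the interim correlated game, then verifying that the map $\varphi_{ij}$ acting on such a product strategy is equivalent to a type-indexed family of best responses, one coordinate per type. Once this bookkeeping is done, the parity argument of Step 2 closes the proof and delivers $N_{t_i}=2n$.
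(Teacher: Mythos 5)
There is a genuine gap, and it sits exactly where you flagged it: the identification of ``types'' with cycle nodes. Your first two steps (the ex-ante and interim correlated representations are finite two-player perfect information games, and a periodic action of periodicity number $n$ generates an alternating chain $U_A \xrightarrow{P} U_B \xrightarrow{P}\cdots$ containing $2n$ actions) agree with the paper's setup. But the theorem's ``types'' are not the Harsanyi types $t_i\in T_i$ of the Bayesian game reappearing as ``action slots'' in $S_i=A_i^{T_i}$; they are \emph{epistemic} types of an epistemic model built for the perfect-information representation, in the sense of Perea. The paper's proof counts them by invoking the result that in a finite static game one can construct an epistemic model in which every type expresses common belief in rationality and assigns probability $1$ to one specific choice and one specific type of the opponent, together with the fact that in two-player games rationalizable choices coincide with choices that can be made under common belief in rationality. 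This yields exactly one epistemic type per rationalizable action in the chain, hence $2n$ types. Your proposal never introduces this epistemic machinery, so the number $2n$ of chain elements is never connected to a count of types in the sense the theorem intends.

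Moreover, the bookkeeping you defer to ``Step 3'' cannot be completed as described. In the ex-ante game a single node of the periodicity cycle is a whole strategy $s_i\colon T_i\to A_i$, i.e.\ it specifies actions for \emph{all} of player $i$'s Bayesian types simultaneously, not for exactly one; so there is no bijection between composition factors $\varphi_{AB},\varphi_{BA}$ and individual Bayesian types. The paper's own example makes the discrepancy concrete: the Bayesian game there has three Harsanyi types $(t_1,t_1',t_2)$, yet the cycle $DU\xrightarrow{P}R\xrightarrow{P}DU$ has $n=1$ and $N_{t_i}=2$, the two being epistemic types attached to the rationalizable actions $DU$ and $R$ of the ex-ante game. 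To repair the argument, replace your Step 3 by the association of one epistemic type to each rationalizable action in the chain via Perea's construction, as in the paper's proof of the analogous Theorem in Section 5.
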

\begin{proof}

We shall call rationalizable strategies those which are
rationalizable for the corresponding ex-ante or interim correlated
strategic form game, without specifying to which we refer
\cite{rat2,rat3}. The results hold  for either case.
Having this in mind, for every such action, if the periodicity
number is $n$, it is possible to construct a periodic chain
with exactly $2n$ rationalizable actions appearing in that chain.
Therefore, we need to prove that for each action appearing in the
rationalizability chain there exists at least one type, so the
minimum number of types corresponding to all the actions of the
rationalizability chain is $2n$. As is proved in \cite{pereabook},
in a static game with finitely many choices for every player, it
is always possible to construct an epistemic model in which,
 \begin{itemize}
 \item Every type expresses common belief in rationality

\item Every type assigns for every opponent probability 1 to one specific choice and one specific type for that opponent.
\end{itemize}
Thus, for two player games, each type for player A, for example,
assigns probability 1 to one of his opponent's actions and one
specific type for that action, such that this action is optimal
for his opponent. In addition, in two player games, rationalizable
actions and choices that can be made under common belief in
rationality coincide. Hence, we can associate
to every rationalizable action of player A exactly one type which
in turn assigns probability 1 to one specific rationalizable
action and one specific type of his opponents type's and actions.
Moreover, as proved in \cite{pereabook}, the actions
that can rationally be made under common belief in rationality are
rationalizable. To state this more formally, in a static game with
finitely many actions for every player, the choices that can
rationally be made under common belief in rationality, are exactly
those choices that survive iterated elimination of strictly
dominated strategies. Hence, for two player games, we conclude
that strategies which express common belief in rationality and
rationalizable strategies coincide. This is
 because all beliefs in two-player games are independent. (This is not always true in games with more than two
 players, however.) Therefore, when periodic rationalizable strategies are
 considered, the total number of types needed for a rationalizability
 cycle is equal to $2n$. This concludes the proof.
\end{proof}

\section{Periodicity and
  Cooperativity}

While our concept of a periodic solution seems to involve some form of cooperativity, this is of course different from what is called cooperative game theory. The latter is about binding commitments, coalitions and the distribution of payoffs inside such coalitions. All these features are absent in our setting. For further illustration, we shall now discuss  one of the most refined cooperative game theory
concepts, that of a  cooperative-competitive (CO-CO) solution
 \cite{kalai} (see also \cite{bressan}) and we shall compare the results of this
solution concept with those that result from the periodic
strategies algorithm.

\subsection{Cooperative-Competitive Equilibrium}

Consider a general, two player non-zero sum game with players $A$
and $B$, described by the payoff functions $\Phi^A$ and $\Phi^B$,
with:
\begin{equation}\label{coco1}
\Phi^A:\mathcal{M}(A)\times \mathcal{M}(B)\rightarrow \Re,{\,}{\,}{\,}\Phi^b:\mathcal{M}(A)\times \mathcal{M}(B)\rightarrow \Re
\end{equation}
with the strategy spaces $\mathcal{M}(A)$ and $\mathcal{M}(B)$
being compact metric spaces, and the payoff functions being
continuous functions from $\mathcal{M}(A)\times \mathcal{M}(B)$
into $\Re$. If cooperativity and communication between players is
allowed, the players $A$ and $B$ can adopt a set of strategies
$(a^{\sharp},b^{\sharp})$ that maximizes their combined payoffs,
\begin{equation}\label{combinedpay}
V^{\sharp}=\Phi^A(a^{\sharp},b^{\sharp})+\Phi^B(a^{\sharp},b^{\sharp})=\mathrm{max}_{a,b{\,}\in{\,}\mathcal{M}(A)\times \mathcal{M}(B)}\Big{[}\Phi^A(a,b)+\Phi^B(a,b)\Big{]}
\end{equation}
The choice of the strategy $(a^{\sharp},b^{\sharp})$ may favor
 one player more than the other. In such a case, the player that is
better off must provide some incentive to the other player, in
order that he complies with the strategy
$(a^{\sharp},b^{\sharp})$. This incentive is actually a side
payment. Splitting the total payoff, $V^{\sharp}$ into two equal
parts will not be acceptable, because this does not reflect the
relative strength of the players and their personal contributions
to their cooperativity outcomes \cite{bressan}. A more realistic
approach was introduced by \cite{kalai} which we shall now describe.
Define the following game:
\begin{equation}\label{payoffsplit}
\Phi^{\sharp}(a,b)=\frac{\Phi^A(a,b)+\Phi^B(a,b)}{2},{\,}{\,}{\,}\Phi^S(a,b)=\frac{\Phi^A(a,b)-\Phi^B(a,b)}{2}
\end{equation}
These relations actually imply that the original game is
split into two games, a purely cooperative one, with payoff
$\Phi^{\sharp}(a,b)$, and a competitive one (which is a zero sum
game), with payoff $\Phi^S(a,b)$. In the cooperative game, the
players have  equal payoffs, that is, they both receive
$\Phi^{\sharp}(a,b)$, while in the purely competitive part, the
players have  opposite payoffs, namely $\Phi^S(a,b)$ and
$-\Phi^S(a,b)$.

\noindent Denote  the value of the zero-sum game by  $V^S$, with utility function $\Phi^S(a,b)$.

Having found the value of the game, the cooperative-competitive value of the game is defined as the  payoff pair
\begin{equation}\label{cc1}
(\frac{V^{\sharp}}{2}+V^S,\frac{V^{\sharp}}{2}-V^S)
\end{equation}
The cooperative-competitive solution of the game is defined as 
the pair of strategies $(a^{\sharp},b^{\sharp})$, together with a
side payment $\mathcal{P}_S$ from player B to player A, such that:
\begin{align}\label{ccsss}
&\Phi^A(a^{\sharp},b^{\sharp})+\mathcal{P}_S=\frac{V^{\sharp}}{2}+V^S
\\ \notag & \Phi^B(a^{\sharp},b^{\sharp})-\mathcal{P}_S=\frac{V^{\sharp}}{2}-V^S
\end{align}
Obviously, the side payment can be negative, in which case player A pays player B the amount $\mathcal{P}_S$.

Conceptually, the cooperative-competitive solution is opposite to the algorithm that yields periodic strategies, owing
to the fact that the cooperative-competitive solution, namely the
strategy pair $(a^{\sharp},b^{\sharp})$, is determined by
maximizing the sum of the player's and his opponent's utility. The
periodic strategies on the other hand are computed by maximizing
each player's own payoff, with respect to the opponent's actions.
We  shall now  present some characteristic examples and  compare
the cooperative-competitive solution and the periodic algorithm
solution.

\subsection{Cooperative-Competitive Solution and Periodicity Algorithm--Some Examples}

Consider the Battle of Sexes game that appears in Table \ref{bs}.
\begin{table}[h]
\centering
  \begin{tabular}{| l |l |l | }
    \hline
       &  $b_1$ & $b_2$ \\ \hline
  $a_1$ & 2,1 & 0,0   \\ \hline
    $a_2$ & 0,0 & 1,2\\
    \hline
 \end{tabular}
\caption{Battle of Sexes} \label{bs}
\end{table}

As we demonstrated in Ref. \cite{jostoikonomou}, for this game
both the pure strategy pairs $(a_1,b_1)$ and $(a_2,b_2)$ are
periodic strategies. Moreover, when we apply the periodic
strategies algorithm to mixed strategies, we obtain a mixed
strategy that yields the same payoffs as the mixed Nash
equilibrium, with the difference that each player's payoff does
not depend on his opponent's actions. Let us recall the results:

\noindent The mixed Nash equilibrium for this game is
$(p_N^*=\frac{2}{3},q_N^*=\frac{1}{3})$ and moreover, the
application of the periodic strategies algorithm yields the
strategy, $(p_p^*=1/3,q_p^*=2/3)$. The expected utilities of the
players are:
\begin{align}\label{periodicnashutilitiesforcomp}
&{\mathcal{U}_1}_{p,q}(p_p^*=1/3,q)=\frac{2}{3}, \\ \notag &
{\mathcal{U}_2}_{p,q}(p,q_p^*=2/3)=\frac{2}{3}, \\ \notag &
{\mathcal{U}_1}_{p,q}(p,q_N^*=1/3)=\frac{2}{3}, \\ \notag &
{\mathcal{U}_1}_{p,q}(p_N^*=2/3,q)=\frac{2}{3} \\ \notag &
\end{align}
Hence, the payoff corresponding to the mixed Nash equilibrium is
$({\mathcal{U}_1}_N,{\mathcal{U}_2}_N)=(2/3,2/3)$ and the
algorithm of periodic strategies yields the payoffs
$({\mathcal{U}_1}_P,{\mathcal{U}_2}_P)=(2/3,2/3)$. Let us now turn
 to the cooperative-competitive solution of the Battle of
Sexes game. By the procedure described in the previous
subsection,  the zero-sum game of the Battle of Sexes
game is given in table \ref{bs1}.
\begin{table}[h]
\centering
  \begin{tabular}{| l |l |l | }
    \hline
       &  $b_1$ & $b_2$ \\ \hline
  $a_1$ & 1/2 & {\,}{\,}{\,}0   \\ \hline
    $a_2$ & {\,}{\,}0 & -1/2\\
    \hline
 \end{tabular}
\caption{Battle of Sexes} \label{bs1}
\end{table}

We  compute  $V^{\sharp}=3$ and $V^S=0$. It is obvious that the
cooperative-competitive strategy is constituted from any of the
two strategy sets $(a_1,b_1)$ or $(a_2,b_2)$. Within the
cooperative-competitive solution, player B must make a side
payment $\mathcal{P}_S=1$ to player A. Hence, in the
cooperative-competitive solution the final utilities are
$({\mathcal{U}_1}_{CC},{\mathcal{U}_2}_{CC})=(2,2)$. As we can
see, when players cooperate, they receive a higher payoff than in all other non-cooperative payoffs we presented for
this game. Consequently, the strategies that are obtained from the
periodic strategies algorithm are, in expected utility terms, as
non-cooperative as the mixed Nash equilibrium.

\noindent Let us give another example of the non-cooperativity of the mixed and non-mixed periodic
strategies. Consider the game that appears in Table \ref{game1}.
\begin{table}[h]
\centering
  \begin{tabular}{| l |l |l | l |l |}
    \hline
       &  $b_1$ & $b_2$ & $b_3$& $b_4$ \\ \hline
  $a_1$ & 0,7 & 2,5 & 7,0& 0,1  \\ \hline
    $a_2$ & 5,2 & 7,7& 5,2& 0,1\\
    \hline
$a_3$ & 7,0 & 2,5& 0,7& 0,1\\
    \hline
$a_4$ & 0,0 & 0,-2& 0,0& 10,-1\\
    \hline
 \end{tabular}
\caption{Game 1B} \label{game1}
\end{table}

The payoffs corresponding to the mixed Nash equilibrium
($p_N^*=\frac{5}{6},q_N^*=\frac{48}{49}$) and the ones
corresponding to the periodic strategies algorithm
($p_p^*=1/49,q_p^*=\frac{48}{49}$) are
\begin{align}\label{payoffsexam2}
&{\mathcal{U}_1}_{P}(p_p^*=1/49,q)=\frac{146}{49} \\ \notag &
{\mathcal{U}_2}_{P}(p,q_p^*=1/6)=\frac{35}{6} \\ \notag &
{\mathcal{U}_1}_{N}(p,q_N^*=48/49)=\frac{146}{49} \\ \notag &
{\mathcal{U}_1}_{N}(p_N^*=5/6,q)=\frac{35}{6} \\ \notag &
\end{align}
The strategy $(a_1,b_2)$ corresponds to the
cooperative-competitive strategy. The values $V^{\sharp}$ and
$V^S$ are equal to $V^{\sharp}=56$ and $V^S=-\frac{3}{2}$, and
hence the side payment of player A to player B is
$\mathcal{P}_S=-\frac{47}{2}$. The cooperative-competitive value
of the game (the final payoffs of the two players) is
$({\mathcal{U}_1}_{CC},{\mathcal{U}_2}_{CC})=(\frac{53}{2},\frac{59}{2})$.
By comparing the cooperative payoffs with the non-cooperative
ones, appearing in equation (\ref{payoffsexam2}), it is obvious
that the non-cooperative ones are  smaller than
the cooperative ones. Thus,   the strategies that result from
applying the periodic strategies algorithm are again non-cooperative.

Nevertheless, for some games, the cooperative-competitive
strategies payoff value (in the terminology of
cooperative-competitive equilibria) may coincide with the periodic
mixed or pure strategies payoff. But this occurs only for a rather particular class of games, like the
Prisoner-Dilemma. For example, for the game  in Table \ref{prisdil},
\begin{table}[h]
\centering\label{prisdil}
  \begin{tabular}{| l |l |l | }
    \hline
       &  $B_1$ & $B_2$ \\ \hline
  $A_1$ & 4,4 & -1,6  \\ \hline
    $A_2$ & 6,-1 & 0,0\\
    \hline
 \end{tabular}
\caption{Prisoners Dilemma}
\end{table}

the application of the periodic strategies
results to the strategy pair $(A_1,B_1)$, with payoffs
$({\mathcal{U}_1}_{P},{\mathcal{U}_2}_{P})=(4,4)$. For this game
the values $V^{\sharp}$ and $V^S$ are equal to $V^{\sharp}=8$ and
$V^S=0$, and the side payment of player A to player B is
$\mathcal{P}_S=0$. Consequently, the cooperative-competitive value
of the game is
$({\mathcal{U}_1}_{CC},{\mathcal{U}_2}_{CC})=(4,4)$, which is the
same as the periodic one. However, this is accidental and  an
artifact of the details of the payoff matrix.

\section{Epistemic Game Theory Framework and Periodic Strategies}

In this section, we shall connect the periodicity number $n$
appearing in the automorphism $\mathcal{Q}^n$  defined earlier
to the number of types needed to describe a two player
simultaneous strategic form game within  an epistemic framework.
We shall  assume a perfect information context. The
epistemic game theory formalism was introduced  by  Harsanyi, in order to describe incomplete information
games \cite{harsanyi1,harsanyi2,harsanyi3} and thereafter adopted
by other authors (see for example \cite{rat1,rat2,rat3} and
references therein). Our approach mimics the one used in
\cite{tan88} and also the one adopted from Perea in
\cite{pereabook}. For completeness,  we
shall briefly present the appropriate formalism and reasoning.

\subsection{Belief Hierarchies in Complete Information Games and Types and Common Belief in Rationality}

Consider a two player game with a set of finite actions 
for each player, A and  B. A belief hierarchy for  player A of the game is
constructed from a chain of increasing order beliefs in terms of
objective probabilities as follows \cite{pereabook}:

\begin{itemize}
 \item A first order belief is the belief that player A holds for player B's actions

\item Iteratively, a $k-$th order belief represents the
belief that player A holds for the $(k-1)$-th order belief of
player B.

\end{itemize}
The belief hierarchy expresses in general rational
choices of the players under common
belief in rationality, that is, every player believes in his
opponent's rationality and believes that his opponent believes that
he acts rationally and so on. Since belief hierarchies are
 not so easy to  use in practice,  the concept of a type is introduced, which encompasses all the
information that a belief hierarchy contains, but is a more
compact way to describe such a hierarchy.

Before doing that, let us quantify the belief hierarchies in a
more formal way, in terms of  spaces of probability distributions. With a suitable topology and metric, the space of probability distributions on a compact metric space is again a compact metric space, and therefore, the construction can be iterated, that is, we can consider probability distributions on spaces of probability distributions.

The first
order belief hierarchy is given by all the
probabilities distributions over the  space of actions that 
player $i$ considers possible  for his opponents. By assumption, this set  $X_i^1$ is finite, hence in particular compact,  and we may also equip it with a metric. The space of
first order beliefs then is the space of probability distributions on that space, 
\begin{equation}\label{firblf}
B_i^1=\Delta (X_i^1)
\end{equation}
Iteratively, we obtain the
$k$-th order of uncertainty,
\begin{equation}\label{korblf}
X_i^k=X_i^{k-1}\times (\times_{j\neq i}B_j^{k-1})
\end{equation}
which embodies the $(k-1)$-th order space of uncertainty and also
the $(k-1)$-th order of the opponent's beliefs. Thus, the space
of $k$-th order beliefs is the set $\Delta (X_i^k)$. A belief
hierarchy $b_i$ for the player $i$  is an infinite chain of
beliefs $b_i^k$ $\in$ $B_i^k$, $\forall$ $k$, that is:
\begin{equation}\label{totbelhier}
b_i=(b_i^1,b_i^2,...,b_i^k)
\end{equation}
Relation (\ref{totbelhier}) encodes what was said  above. The belief hierarchy is assumed to be
coherent, which means that the various beliefs  in
the belief hierarchy do not contradict each other, that is, for
$m>k$
\begin{equation}\label{}
\mathrm{mrg}(b_i^m\lvert X_i^{k-1})=b_i^{k-1}
\end{equation}
Having defined coherent belief hierarchies, the epistemic
framework is constructed using the definition of an epistemic type
which is simply a coherent belief hierarchy for a player $i$. A
type corresponds to some epistemic model constructed for the game,
so let $T_i$ be the total number of types needed to describe
player $i$. In addition, for every player $i$ and for every
$t_i$ $\in$ $T_i$, the epistemic model specifies a probability
distribution $b_i(t_i)$ over the set $C_{-i}\times T_{-i}$, which
represents the set of choice-types of player $i$'s opponent $-i$.
The probability distribution $b_i(t_i)$ stands for the belief that
a player $i$'s type $t_i$ holds about player's $-i$ actions and
types, so
\begin{equation}\label{}
b_i:T_i\rightarrow \Delta (T_{-i}\times C_{-i})
\end{equation}
for a two player game. The type of a player $i$
is the  complete belief hierarchy. Now a choice $c_i$ of player $i$ is optimal for
his type $t_i$ if it is optimal for the first order
beliefs that $t_i$ holds about the opponent's choices. Within the
epistemic game theoretic framework, one can easily
define common belief in rationality. Indeed, we say that the type
$t_i$ believes in the opponent's rationality if $t_i$ assigns
positive probability to his opponents $-i$ choice types
$(c_{-i},t_{-i})$, in which case $c_{-i}$ is optimal for type
$t_{-i}$. Having defined the belief in opponent's rationality, we
define the $k-$fold belief in rationality \cite{pereabook}:

\begin{itemize}
 \item Type $t_i$ expresses 1-fold belief in rationality if $t_i$ believes in the opponent's rationality


\item Iteratively, type $t_i$ expresses $k$-fold belief in rationality if $t_i$ assigns positive probability to opponent types that express $(k-1)$-fold belief in rationality.

\item Type $t_i$ corresponding to player $i$ expresses common belief in rationality, if it expresses $k-$fold belief in rationality for every $k$.
\end{itemize}

\noindent In addition, we can formally define a {\it rational choice}, when common belief in rationality is assumed in the game, as follows: A choice $c_i$ of player $i$  is rational under common belief in rationality, if there is some type $t_i$ such that:
\begin{itemize}
\item Type $t_i$, expresses common belief in rationality
\item Choice $c_i$ is optimal for this type $t_i$
\end{itemize}
Our aim is to connect the periodicity number $n$ 
defined earlier to the number of types that are necessary to
describe a simultaneous two player finite action game. This
connection will use the point rationalizable
strategies.

\subsection{The Connection of the Periodicity Number to the total Number of Types of the Epistemic Model}

As  demonstrated in Ref. \cite{jostoikonomou} the
rationalizable actions that are also periodic are particularly
interesting, since for these we can connect the total periodicity
number $n$ to the numbers of types needed to describe the
game with an epistemic model. This relation can be described by
the following theorem:

\begin{theo}\label{4}

In a two player perfect information strategic form game, the number
of types $N_{t_i}$ corresponding to the periodic cycle of a
rationalizable periodic action is
\begin{equation}\label{mainsecresults}
N_{t_i}=2{\,}n
\end{equation}
\end{theo}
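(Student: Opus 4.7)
The plan is to mirror the argument already carried out in the proof of the analogous theorem for Bayesian games in Section 3, since Theorem 4 is essentially the perfect-information version of that result. First I would unpack the meaning of the periodicity number $n$ in the two-player setting: recalling from Section 2 that in the two-player case the utility chain alternates $U_A \to U_B \to U_A \to \cdots$, a periodic cycle with periodicity number $n$ (i.e.\ one closed by $\mathcal{Q}^n$) traverses exactly $n$ actions of player $A$ and $n$ actions of player $B$, for a total of $2n$ rationalizable actions appearing along the chain. This counting step is purely combinatorial and follows from the structure of $\mathcal{Q}$ together with the alternating nature of the periodicity diagram (node-by-node tracing as illustrated for the two-player examples in \cite{jostoikonomou}).

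Next I would import the epistemic construction of Perea \cite{pereabook}: in any static game with finitely many choices for each player, there exists an epistemic model in which every type expresses common belief in rationality, and moreover every type assigns probability $1$ to exactly one opponent choice and exactly one opponent type. For two-player games this is especially tight, because beliefs are automatically independent, and the classical fact that rationalizable actions coincide with those surviving iterated elimination of strictly dominated strategies means that the choices rationally made under common belief in rationality are precisely the rationalizable ones. Consequently, each rationalizable action in the periodic chain can be put into one-to-one correspondence with a type that (i)~makes that action optimal and (ii)~pins down, with probability one, the next choice-type pair in the chain.

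From here the theorem follows: run the correspondence along the periodic cycle. Each of the $2n$ rationalizable actions on the cycle produces (and in fact requires) its own type, because the type of a player at one stage must assign probability one to a specific opponent action-type pair, and that opponent type is the one at the next stage of the cycle. Since the cycle closes after $2n$ steps, the chain of types closes consistently with exactly $N_{t_i} = 2n$ types, and no smaller number suffices, since fewer types would force two distinct action-stages to share a type, contradicting the probability-one assignment to distinct successor actions. This gives both the upper and lower bound and completes the proof.

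The main obstacle I expect is the lower-bound direction: upper-bounding $N_{t_i}$ by $2n$ is a direct construction, but showing that $2n$ types are \emph{necessary} requires being careful that no two distinct positions of the periodic cycle can collapse into the same type. The key point is that a type under the Perea construction determines a unique optimal action and a unique opponent successor type; hence any identification of two cycle positions would force their successors to agree, and iterating would collapse the cycle to a shorter period, contradicting the definition of $n$ as the periodicity number. Making this collapsing argument precise — essentially a minimality argument for $n$ — is the delicate step, and I would phrase it as an induction along the cycle.
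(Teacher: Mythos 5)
Your proof follows essentially the same route as the paper's: count the $2n$ rationalizable actions traversed by the alternating two-player periodicity chain, then invoke Perea's epistemic construction to attach to each rationalizable action on the cycle exactly one type that expresses common belief in rationality and assigns probability one to a specific opponent choice--type pair. Your explicit collapsing argument for the lower bound --- that identifying two cycle positions would force their successors to coincide and hence shorten the period, contradicting the minimality of $n$ --- is in fact spelled out more carefully than in the paper, which simply asserts that each action in the chain requires at least one type.
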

\begin{proof}

For every such action if the periodicity number is $n$, it is
possible to construct a periodicity chain with exactly $2n$
rationalizable actions appearing in that chain. Therefore what is
necessary to prove is that for each action appearing in the
rationalizability chain, there exist at least one type, so the
minimum number of types corresponding to all the actions of the
rationalizability chain is $2n$. As  proved in \cite{pereabook},
in a static game with finitely many choices for every player, it
is always possible to construct an epistemic model in which,
 \begin{itemize}
 \item Every type expresses common belief in rationality

\item Every type assigns for every opponent probability 1 to one
specific choice and one specific type for that opponent.
\end{itemize}
Therefore, for two player games, each type for player A for example,
assigns probability 1 to one of his opponents actions and one
specific type for that action, such that this action is optimal
for his opponent. In addition, in two player games, rationalizable
actions and choices that can be made under common belief in
rationality coincide. Hence, we can associate
to every rationalizable action of player A exactly one type which
in turn assigns probability 1 to one specific rationalizable
action and one specific type of his opponent's types and actions.
Moreover, as  proved in \cite{pereabook}, the actions that can
rationally be made under common belief in rationality are
rationalizable. To state this more formally, in a static game with
finitely many actions for every player, the choices that can
rationally be made under common belief in rationality are exactly
those  that survive iterated elimination of strictly
dominated strategies. Hence, for two player games, we conclude
that strategies which express common
 belief in rationality and rationalizable strategies coincide. This is
 because all beliefs in two-player games are independent, something
 that is not always true in games with more than two
 players. Therefore, when periodic rationalizable strategies are
 considered, the total number of types needed for a rationalizability
 cycle is equal to $2n$.
\end{proof}

\subsubsection{A Comment on Simple Belief Hierarchies and Nash Equilibria}

Within an epistemic game theory context, a type $t_i$ is said to
have a simple belief hierarchy, if $t_i$'s belief hierarchy is
generated by some combination $\sigma_i$ of probabilistic beliefs
about the players choices. Thus, a type has a simple
belief hierarchy if it is believed that his opponents are correct
about his beliefs. As  proved in
\cite{pereabook}, a simple belief hierarchy, given by  probabilistic beliefs $\sigma_i$ about players' choices,
expresses common belief in rationality, if the combination
$\sigma_i$ of beliefs is itself a Nash equilibrium. The converse
is not always true. Hence, using the theorem above, the number of
types needed to describe a simple belief hierarchy for a Nash
equilibrium is 2. Obviously, if a Nash action is periodic, then
$n=1$ and applying relation (\ref{mainsecresults}), we find that
the types needed in the periodic Nash case are two.

\noindent There is  an interesting point regarding simple
belief hierarchies. When considering two player games, it is
proved (see \cite{pereabook}, theorem 4.4.3) that a type $t_i$ has
a simple belief hierarchy iff $t_i$ believes that his opponent
holds correct beliefs and believes that his opponent believes that
he holds correct beliefs himself. Thus, he believes that he does not err in his prediction about his opponent's beliefs, and he
believes that for his opponent too. In higher order beliefs this
is no longer true, and therefore we could argue that the total
number of wrong beliefs of all the two players about each other's
beliefs is equal to $2n-1$. Thus, the total number of
errors of the two players is  $2n-1$. Errors here are  the beliefs $\sigma_i$ due to which the higher order
belief hierarchy fails to be a simple belief hierarchy.

\section*{Concluding Remarks}

In this work we have studied extensions and generalizations of  the
periodicity concept  introduced in
\cite{jostoikonomou}. In particular, we have shown  the
existence of periodic strategies in multi-player perfect
information simultaneous strategic form games. We also proved that the set of
periodic strategies is set-stable under the periodicity map. In
addition, we discussed the presence of periodic strategies in
games with incomplete information, focusing on Bayesian games. In
that case we made extensive use of various generalizations of
Bernheim's rationalizability concept. The issue of cooperativity
and periodicity was formally addressed as well. The
periodic strategies are simply as cooperative as the mixed Nash
equilibrium. In an epistemic framework, the number of types
needed to describe the rationalizability cycle of a rationalizable
periodic strategy  equals twice the periodicity number of
that action. The next step  would be the
inclusion of mixed strategies in multi-player games. Actually, the
cooperativity issue in games with more than two players becomes
more complex, because  the players are free to form coalitions. Periodicity then has
to be reconsidered under this perspective.

Clearly, the periodicity feature for finitely many actions of
strategic form games can be very useful. Indeed, all the periodic
actions can be found using some simple program. This result is actually a common
feature of every non-degenerate finite action game, that is, every
non-Nash rationalizable action is usually periodic. This can be
very useful for games that have, as we mentioned, finitely many
actions, since the potential non-Nash rationalizable actions can
be determined by finding the periodic strategies. Furthermore, an
interesting future study would be to consider 3-player mixed
strategies and their relation to periodic strategies. One should
carefully examine whether there is any exceptional class of games
with the special attributes of the two player games that we
presented in the present article. In particular, we should check
whether the algorithm of periodic strategies leads to strategies
for which the expected utility of players is higher
than the corresponding Nash one, and in addition if the periodic
strategies for a player are independent of the other player's
action,  as in the two player case. In addition, the
multi-player cooperativity issue should also be formally
addressed. The question whether the periodic strategies imply any
sort of cooperativity has to be re-addressed in a multi-player
context. This is because, in cases with $N\geq 3$ players, two or
more players may form coalitions in order to cooperate against
the rest. Moreover, one can investigate  the case of continuum utility functions. Finally, in the case
of Bayesian games, one might look for a  connection between the types of
the imperfect information case and the corresponding Ex-ante or
interim game, or  a connection between  periodicity  imperfect information types spaces.

An important feature of  periodic
strategies as examined in this paper is that they  make a player robust against the way that the
opponent-rival decides to play the game. In contrast to the Nash
strategies, where each player relies on his opponent's rationality
and on the fact that the opponent will actually play the Nash
strategy too,  the  payoff of a player that uses a periodic strategy is not affected by the opponent's actual actions. This is valuable
in non-trivial games, like the  prisoner's dilemma. It is remarkable that although we used a
non-trivial non-cooperative context, we ended up that the optimal
equilibrium of the game is the socially optimal solution. In this
work we demonstrated how periodic strategies can be realized in  
multi-player simultaneous perfect information games and also in
games with imperfect information. Hence this shows that the
periodicity concept seems to be an inherent feature of every
non-trivial game. The advantage  of the periodic strategies over
the Nash strategies is  that the periodic
strategies players do not depend on the rationality of the
opponent. Although rationality is considered a prerequisite in
most  games,  there exist many modern politics and
economics related examples where rationality is questioned. More
importantly, in many cases the opponents may have hidden
information, so although a player might think that the payoff are
given and the game is played with perfect information about the
payoffs of the game, the opponent might act non-rationally with
respect to the perfect information game, but rationally with
respect to the hidden information game. 
the periodic strategies then are safe strategies in the sense that  the possibility of loosing  is minimized or
controlled in a formal way.

\end{document}